%


\documentclass[pdflatex,sn-mathphys-num]{sn-jnl}
\usepackage{graphicx}%
\usepackage{multirow}%
\usepackage{amsmath,amssymb,amsfonts}%
\usepackage{amsthm}%
\usepackage{mathrsfs}%
\usepackage[title]{appendix}%
\usepackage{xcolor}%
\usepackage{textcomp}%
\usepackage{manyfoot}%
\usepackage{booktabs}%
\usepackage{verbatim}
\usepackage{algorithm}%
\usepackage{algorithmicx}%
\usepackage{algpseudocode}%
\usepackage{listings}%
\newtheorem{lemma}{Lemma}
\newtheorem{cor}{Corollary}
\setcounter{MaxMatrixCols}{40}

\def\x{\boldsymbol{x}}

\def\0{{\mathbf 0}}

\newcommand{\F}{\mathbb{F}}
\newcommand{\Z}{\mathbb{Z}}

\newcommand{\Mod}[1]{\ (\mathrm{mod}\ #1)}


\theoremstyle{thmstyleone}%
\newtheorem{theorem}{Theorem}
%

\theoremstyle{thmstyletwo}%
\newtheorem{remark}{Remark}%

\theoremstyle{thmstylethree}%

\raggedbottom

\begin{document}

\title[Hermitian Hulls of Rational Algebraic Geometry Codes and Applications in Quantum Codes]{Hermitian Hulls of Rational Algebraic Geometry Codes and Applications in Quantum Codes}


\author*[1]{\fnm{Lin} \sur{Sok}}\email{lin.sok@ntu.edu.sg}

\author[1]{\fnm{Martianus Frederic} \sur{Ezerman}}\email{fredezerman@ntu.edu.sg}

\author[1,2]{\fnm{San} \sur{Ling}}\email{lingsan@ntu.edu.sg; ling.s@vinuni.edu.vn}

\affil*[1]{\orgdiv{School of Physical and Mathematical Sciences}, \orgname{Nanyang Technological University}, \orgaddress{\street{21 Nanyang Link}, \city{Singapore}, \postcode{637371}, \state{} \country{Singapore}}}

\affil[2]{\orgdiv{}\orgname{VinUniversity}, \orgaddress{\street{Vinhomes Ocean Park, Gia L\^{a}m}, \city{Hanoi}, \postcode{100000}, \state{}\country{Vietnam}}}


\abstract{Interest in the hulls of linear codes has been growing rapidly. More is known when the inner product is Euclidean than Hermitian. A shift to the latter is gaining traction. The focus is on a code whose Hermitian hull dimension and dual distance can be systematically determined. Such a code can serve as an ingredient in designing the parameters of entanglement-assisted quantum error-correcting codes (EAQECCs).

We use tools from algebraic function fields of one variable to efficiently determine a good lower bound on the Hermitian hull dimensions of generalized rational algebraic geometry (AG) codes. We identify families of AG codes whose hull dimensions can be well estimated by a lower bound. Given such a code, the idea is to select a set of evaluation points for which the residues of the Weil differential associated with the Hermitian dual code has an easily verifiable property.

The approach allows us to construct codes with designed Hermitian hull dimensions based on known results on Reed-Solomon codes and their generalization. Using the Hermitian method on these maximum distance separable (MDS) codes with designed hull dimensions yields two families of MDS EAQECCs. We confirm that the excellent parameters of the quantum codes from these families are new.}

\keywords{algebraic geometry code, entanglement-assisted quantum code, generalized Reed-Solomon code, Hermitian hull, maximum distance separable code}



\maketitle

\section{Introduction}\label{sec1}

Two algorithms spurred the quest to build quantum computers of a large-enough scale for cryptography. Shor in \cite{Shor} proposed a general quantum attack algorithm that can completely break the existing public key infrastructures. Grover in \cite{Grover} introduced a quantum algorithm with a roughly quadratic speed improvement over its classical counterpart on searching over any unsorted data. 

Quantum error-correcting codes, or {\em quantum codes} in short, form an essential component in controlling noise and decoherence. Qubit \emph{stabilizer codes}, which encode information on $2$-level quantum ensembles, were introduced by D. Gottesman in \cite{Gottesman97} and situated firmly in a general mathematical framework by Calderbank, Rains, Shor, and Sloane in \cite{CRSS98}. Generalization to qudit ($q$-level) quickly followed, with the work of Ketkar, Klappenecker, Kumar, and Sarvepalli in \cite{Ket Kla} being a good place to consult for further details. Stabilizer codes can be constructed via classical codes that satisfy certain orthogonality conditions, typically defined based on the Euclidean or Hermitian inner product. 

A different approach comes in the form of {\em entanglement-assisted quantum error correcting codes} (EAQECCs). These codes were first proposed by Bowen in \cite{Bow} and made popular by Brun, Devetak, and Hsieh in \cite{BDH06}. They showed that the orthogonality conditions can be relaxed, provided that the communicating parties can prepare and maintain a number of pre-shared pairs of entangled states. Wilde and Brun showed how to construct EAQECCs via classical codes in \cite{WilBru}.

The {\em hull} of a linear code is the intersection of the code with its dual code, where the dual is defined with respect to some inner product. The hull dimension is useful in deriving some parameters of an EAQECC. For results on the Euclidean hulls with application to EAQECCs, we can consult, {\it e.g.}, the works done in \cite{LCC,PerPel,Sok1D2}.

The quantum version of the Singleton bound for stabilizer codes as well as EAQCCs had been studied quite extensively. Quantum codes whose parameters meet the relevant Singleton-type bound with equality are said to be \emph{maximum distance separable} (MDS). New Singleton-like bounds for EAQECCs have recently been derived by Grassl, Huber, and Winter in \cite{GHW}, also to correct inaccuracies in prior versions of the bounds. Guenda, Gulliver, Jitman, and Thipworawimon studied the $\ell$-intersection pair of linear codes in \cite{GGJT18}. They determined, using the Euclidean construction, the parameters of \emph{all} qudit MDS EAQECCs of length $n \leq q+1$. 

For lengths $n > q+1$ one can use the Hermitian route. The resulting EAQECCs have better parameters than those from the Euclidean one. By studying the Hermitian hulls of MDS linear codes, Fang, Fu, Li, and Zhu in \cite{FangFuLiZhu} and, separately, Pereira, Pellikaan, La Guardia, and Assis in \cite{PerPel} built qudit MDS EAQECCs from generalized Reed-Solomon codes and one-point rational AG codes. Other works on qudit MDS EAQECCs include \cite{ChenHuangFengChen,FanChenXu,LSEL,WangZhuSun,Sari} and a good number of their references. Assuming the classical MDS conjecture, nontrivial MDS EAQECCs which are derived from classical codes have restrictive code lengths. For more detailed treatment on $q$-ary non-MDS EAQECCs when the values of $q$ are small, interested readers are referred to the works in \cite{SES,SESM,SQ}. Some propagation rules, which can also serve as tools for performance comparison among codes from different constructions, have been given in \cite{Anderson24,Luo2022}.

Algebraic geometry (AG) codes are known in the literature to be good classical ingredients in the constructions of both stabilizer codes and EAQECCs. The Euclidean dual of a given AG code is characterized in \cite{Stich}. The Euclidean hull has been recently explored in \cite{PerPel,Sok1D2}. Studying the Hermitian hull is more challenging since we know comparatively little on its characterization. 

The Hermitian hull of a one-point rational AG code for some special lengths has been treated in \cite{PerPel}. There, the dimension is computed by examining a basis. Entanglement is not a freely available resource. Creating, distributing, and maintaining entangled states incur costs. In the entanglement-assisted setup, one typically prefers codes that require smaller number of pre-shared entangled states. To design EAQECCs with good parameters, we want codes with large Hermitian hull dimensions. This motivates our study on the Hermitian hulls of one-point generalized rational AG codes. We devise $\mathbb{F}_{q^2}$-linear MDS codes whose Hermitian hull dimension can be lower bounded by a quantity close to the actual value.

Our contributions can be summarized into three insights.
\begin{enumerate}
\item Lemma \ref{lem:dim-hull} serves as a key to determine a good lower bound on the Hermitian hull dimension of a (generalized) rational AG code. The lemma leads to Theorem \ref{thm:2new} which constructs one-point AG codes whose Hermitian hull dimensions can be explicitly found. The idea is to select a set of evaluation points for which the residues of the Weil differential associated with the dual code are the $(q+1)^{\rm st}$ power elements in $\F_{q^2}$. 

\item Theorem \ref{thm:2new} allows us to construct codes with designed hull dimensions based on known results on Reed-Solomon codes and their generalization, depending on their sets of evaluation points, as explained in Corollaries \ref{cor:1} and \ref{cor:2}. To the best of our knowledge, not much has been done on the determination of the Hermitian hull dimension of an AG code. Pereira {\it et al.} in \cite{PerPel} considered the Hermitian hulls of one-point rational AG codes over $\F_{q^2}$ for maximal length, which is $q^2$. Here, we treat diverse lengths. 

\item Using the Hermitian method, we build qudit MDS EAQECCs with specific number of pre-shared entangled states from $q^2$-ary linear MDS codes. We subsequently obtain two families of MDS EAQECCs, formally stated in Theorems \ref{thm:Q0} and \ref{thm:Q1}. We confirm that the parameters of the codes from these families are new. Theorem \ref{thm:summary} summarizes sufficient conditions that ensure the existence of MDS EQAECCs of the specified parameters.
\end{enumerate}

\begin{theorem}\label{thm:summary}
Let $q$ be a prime power. Let $n_0$ and $q_1$ be integers such that $1 \le n_0\le q-1$ and $0\le q_1 \le q-1$. Let $k = k_0 \, q + q_0$, with $1 \le k_0 < \lfloor (n_0 \, q-q_0)/q \rfloor$, $0\le q_0\le q-1$, and $q_1-q_0\le 1$. If $\ell$ is defined, in cases, as
\begin{equation*}
\ell=
\begin{cases}
k_0(n_0-k_0)+q_0+1 \mbox{ if } k_0\le q_1+q-q_0-2 \mbox{ and } q_0 \le n_0-k_0-2,\\
(k_0+1)(n_0-k_0) \mbox{ if } k_0 \le q_1+q-q_0-2 \mbox{ and } q_0\ge n_0-k_0-1,\\
k_0(n_0-k_0-1)+q_1+q \mbox{ if }  k_0> q_1+q-q_0-2 \mbox{ and } q_0< n_0-k_0-2,\\
(n_0-k_0-1)(k_0+1)+(q_1+q-q_0-1) \mbox{ if } k_0> q_1+ q-q_0-2 \mbox{ and } q_0 \ge n_0-k_0-2,
\end{cases}
\end{equation*}
then the following assertions hold.
\begin{enumerate}
\item There are quantum codes ${\cal Q}_1$ and ${\cal Q}_2$ with respective parameters
\begin{align*}
&[[n_0 \, q, k+1-\ell,  n_0 \, q-(k+1) ; n_0 \, q - (k+1)-\ell]]_{q} \mbox{ and}\\
&[[n_0 \, q, n_0 \, q-(k+1) -\ell,  (k+1);(k+1)-\ell]]_{q}.
\end{align*}
\item Let $t$, $s$, and $r$ be integers such that $s$ divides $(q^{2}-1)$, $\displaystyle{r=\frac{s}{\gcd (s,q+1)}}$ and $1 \leq t < \frac{q-1}{r}$. For $n = (t+1) \, s + 1 =n_0 \, q + q_1$, there are quantum codes ${\cal Q}_1$ and ${\cal Q}_2$ with respective parameters
\begin{align*}
&[[n, k+1-\ell,  n-(k+1);n-(k+1)-\ell]]_{q} \mbox{ and}\\
& [[n, n-(k+1)-\ell,  (k+1);(k+1)-\ell]]_{q}.
\end{align*}
\end{enumerate}
\end{theorem}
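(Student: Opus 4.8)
The strategy is to view Theorem~\ref{thm:summary} as a bookkeeping consolidation of the earlier machinery: Lemma~\ref{lem:dim-hull} and Theorem~\ref{thm:2new} produce one-point generalized rational AG codes (equivalently, generalized Reed--Solomon codes) over $\F_{q^2}$ with a prescribed Hermitian hull dimension, Corollaries~\ref{cor:1} and~\ref{cor:2} translate the evaluation-point condition into the two concrete length regimes, and the Hermitian construction of EAQECCs (the Wilde--Brun recipe, cf.\ \cite{WilBru,GHW}) converts a $q^2$-ary $[N,K,N-K+1]$ MDS code $C$ with $\dim_{\F_{q^2}}(C \cap C^{\perp_H}) = \ell$ into a pair of $q$-ary MDS EAQECCs with parameters $[[N, K-\ell, N-K+1; N-K+1-\ell]]_q$ and $[[N, N-K-\ell, K+1; K+1-\ell]]_q$ (taking the code and its Hermitian dual, which share the same hull dimension). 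So the real content to verify is that, for the stated arithmetic constraints on $n_0,q_0,q_1,k_0$, one can choose an evaluation set of the required size so that Theorem~\ref{thm:2new} applies and delivers \emph{exactly} the piecewise value of $\ell$ written in the four cases.

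First I would fix notation: set $N = n_0 q + q_1$ (part~2) or $N = n_0 q$ (part~1), and $K = k+1 = k_0 q + q_0 + 1$, so that the target EAQECC length is $N$ and dimension parameters are $K-\ell$ and $N-K-\ell$; one checks $K \le N$ from $k_0 < \lfloor (n_0 q - q_0)/q\rfloor$. Then I would recall from Theorem~\ref{thm:2new} the formula for the Hermitian hull dimension of the one-point code $C_{\mathcal L}(D, G)$ in terms of how many of the chosen evaluation points yield residues that are $(q+1)$st powers in $\F_{q^2}$; writing the degree of $G$ in the form $k_0 q + q_0$ and the supply of ``good'' points in terms of $n_0$ and the available $q_1$ (or $q-1$) points, the count of good points splits according to whether $k_0$ is small or large relative to $q_1 + q - q_0 - 2$ and whether $q_0$ is small or large relative to $n_0 - k_0 - 2$. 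Matching these two dichotomies against the hull-dimension formula of Theorem~\ref{thm:2new} is exactly what produces the four cases; this is the step I would write out carefully, because each case is a slightly different optimization of ``how many good evaluation points can I pack in,'' and the floor/ceiling interplay is where errors creep in.

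Second, for part~2 I would invoke Corollary~\ref{cor:2} (the Reed--Solomon-with-structured-evaluation-set result governed by $s \mid q^2-1$, $r = s/\gcd(s,q+1)$, $1 \le t < (q-1)/r$, $n=(t+1)s+1$): this guarantees that an evaluation set of the needed cardinality $N$ with the ``$(q+1)$st power residue'' property actually exists, so that the abstract count from Theorem~\ref{thm:2new} is realizable. For part~1 the analogous existence statement comes from Corollary~\ref{cor:1}, where the length $n_0 q$ is achieved by a different (e.g.\ coset- or subgroup-based) choice of points and $q_1$ plays the role of a residual parameter bounded by $q-1$. In both parts, the constraint $q_1 - q_0 \le 1$ is precisely what makes the degree $G = k_0 q + q_0$ compatible with the evaluation count modulo $q$, so I would flag that inequality as the hinge that lets the two sub-results dovetail.

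The main obstacle I anticipate is not any single hard inequality but the combinatorial verification that the four-way case split in the hypothesis is \emph{exhaustive and sharp} — i.e.\ that every admissible quadruple $(n_0,q_0,q_1,k_0)$ falls into exactly one case and that in that case the value of $\ell$ from Theorem~\ref{thm:2new} is the one written down, with no off-by-one slippage in the MDS parameter translation. A secondary check is that the resulting EAQECCs genuinely meet the (corrected) quantum Singleton bound of \cite{GHW}, so that the ``MDS'' claim is legitimate: this amounts to confirming $K - \ell = N - (N-K) - (N-K-\ell)$ and the companion identity for $\mathcal Q_2$, which are immediate from the parameter bookkeeping once the hull dimension is pinned down. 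I would close by noting that the novelty claim (the parameters are new) is verified separately against existing tables, as referenced in the discussion of Theorems~\ref{thm:Q0} and~\ref{thm:Q1}, and does not form part of the mathematical proof of Theorem~\ref{thm:summary} itself.
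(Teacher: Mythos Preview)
Your overall strategy matches the paper's: Theorem~\ref{thm:summary} is indeed a consolidation, with part~1 being Theorem~\ref{thm:Q0} (Corollary~\ref{cor:1} fed into Lemma~\ref{lem:quantum1}, applied once to $C$ and once to $C^{\perp_{\rm H}}$) and part~2 being Theorem~\ref{thm:Q1} (Corollary~\ref{cor:2} fed into Lemma~\ref{lem:quantum1}). The four-case formula for $\ell$ is simply \eqref{eq:ell-general} quoted verbatim from Theorem~\ref{thm:2new}, so no fresh verification of exhaustiveness or sharpness is needed at this stage---that work was already carried out inside the proof of Theorem~\ref{thm:2new}.

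There is, however, one genuine gap in your plan. You write that Theorem~\ref{thm:2new} ``delivers \emph{exactly} the piecewise value of~$\ell$,'' but it does not: Theorem~\ref{thm:2new} and Corollaries~\ref{cor:1}--\ref{cor:2} give only a \emph{lower bound}, hull dimension $\ell' \ge |L(q^2-1)| = \ell$. Feeding a code with hull dimension $\ell' > \ell$ directly into Lemma~\ref{lem:quantum1} would produce $c = n-(k+1)-\ell'$, not the $c = n-(k+1)-\ell$ asserted in the theorem. The paper closes this gap with the hull-reduction argument spelled out just before Theorem~\ref{thm:Q0}: for $q>2$, any $[n,k,d]_{q^2}$ code with Hermitian hull dimension $\ell'$ can be turned, by rescaling $r'$ rows of a systematic generator matrix by an $\alpha \in \F_{q^2}^*$ with $\alpha^{q+1}\neq 1$, into an $[n,k,d]_{q^2}$ code with hull dimension exactly $\ell$, for any $0 \le \ell \le \ell'$. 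This step is essential and your plan omits it.

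Two smaller corrections. First, your account of \emph{where} the four cases originate is off: they do not arise from ``how many good evaluation points can I pack in.'' The evaluation set is fixed in advance (by the construction in Corollary~\ref{cor:1} or~\ref{cor:2}) so that \emph{every} residue ${\rm Res}_{P_i}(\omega)$ is a $(q{+}1)^{\rm st}$ power; the case split comes purely from the combinatorics of the exponent-set intersection $V_1^q \cap V_2$ modulo $q^2-1$ in the proof of Theorem~\ref{thm:2new}. Second, Theorem~\ref{thm:summary} asserts only the existence of ``quantum codes'' with the stated parameters, not that they are MDS, so the Singleton-bound check you propose is not part of what must be proved here.
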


After this introduction, Section \ref{sec:pre} gathers basic notions, definitions, and useful known results on function fields, algebraic geometry codes, and related codes. Section \ref{sec:main} deals with the Hermitian hulls of one-point rational AG codes. The focus is on codes whose hull dimensions can be determined by using the bases of the codes and their dual. We then provide a formula to lower bound the hull dimensions. Section \ref{sec:application} is devoted to the application of Hermitian hulls to EAQECCs. The last section contains concluding remarks. 

\section{Preliminaries}\label{sec:pre}
An $\F_{q}$-linear code $C$ of length $n$, dimension $k$, and minimum distance $d$ is  an $[n,k,d]_q$ code. If $d=n-k+1$, then $C$ is {\em maximum distance separable} ({MDS}). 

The {\em Hermitian inner product} of vectors ${\bf{a}}=(a_1,\ldots, a_n)$ and ${\bf{b}}=(b_1, \ldots, b_n)$ over $\F_{q^2}$ is 
$\langle{\bf{a}},{\bf{b}}\rangle_{\rm H}=\sum_{i=1}^n a_i \, b_i^q$. They are \emph{orthogonal} if their inner product is $0$. The Hermitian {\em dual} of a code $C$, denoted by $C^{\perp_{\rm H}}$, is the set of all vectors which are orthogonal to every codeword of $C$. The code $C$ is \emph{Hermitian self-orthogonal} if $C\subseteq C^{\perp_{\rm H}}$. The \emph{Hermitian hull} of $C$ is 
${\rm Hull}_{\rm H}(C):= C \cap C^{\perp_{\rm H}}$. The Euclidean case is defined analogously by $\langle{\bf{a}},{\bf{b}}\rangle_{\rm E} := \sum\limits_{i=1}^n a_i b_i$.

Given a code $C\subseteq \F_{q^2}^n$ and vectors ${\bf a}=(a_1,\ldots,a_n), {\bf b}=(b_1,\ldots,b_n) \in (\F_{q^2}^*)^n$, we define
\begin{align*}
{\bf a}^q &:=(a_1^q,\ldots,a_n^q), &
\frac{1}{{\bf a}}&:= \left(\frac{1}{a_1},\ldots,\frac{1}{a_n}\right),\\
{\bf a} \, {\bf b} &:=({a_1}{b_1},\ldots,{a_n}{b_n}), &
{\bf a} \, C &:=\{{\bf a} {\bf c}: {\bf c}\in C\}.
\end{align*}

We recall notions related to the algebraic functions of one variable to define algebraic geometry (AG) codes and use Stichtenoch's textbook  \cite{Stich} as the reference for terms that we do not have the space to formally define. 

The \emph{function field} of an algebraic curve $\cal X$ over $\F_q$ is a
finite separable extension of $\F_q(x)$, with $x$ being a transcendental element over $\F_q$. We denote by $\F_q({\cal X})$ the function field of ${\cal X}$. Since $\cal X$ is henceforth fixed to be a \emph{rational curve}, we use the notation $\F_q(x)$ instead. A \emph{place} $P$ of $\F_q(x)/\F_q$ is the maximal ideal of the valuation ring ${\cal O}_P$. A \emph{point} on ${\cal X}$ can be identified with the place of the function field $\F_q(x)/\F_q$. A place at infinity is denoted by $O$. We define a \emph{divisor} $G$ on ${\cal X}$ to be a formal sum $\sum\limits_{P\in {\cal X}}n_PP$ with only finitely many nonzeroes $n_P \in \Z$. A divisor $G$ is \emph{rational} if, for any $\sigma \in \text{Gal}(\overline{\F_q}/\F_q)$, we have $G^\sigma=G$.
The \emph{support} of $G$ is the set ${\rm supp}(G):=\{P\in {\cal X} : n_P \not=0\}$. If $G=\sum\limits_{P\in {\cal X}}n_PP$, then the \emph{degree} of $G$ is $\deg(G) := \sum\limits_{P\in {\cal X}}n_P\deg(P)$, where $\deg (P)$ is the size of the orbit of $P$ under the action $\sigma$.
For a nonzero rational function $z$ on the curve $\cal X$, the \emph{principal divisor} of $z$ is $(z) := \sum\limits_{P\in {\cal X}}v_P(z) \, P$ with $v_P$ being the normalized discrete valuation corresponding to the place $P$. For any $z$ in the local ring ${\cal O}_P$ such that $z=u \, t^s$, with $u$ being a unit and $t$ a generator of the maximal ideal of ${\cal O}_P$, we have $v_P(z):=s$. If $Z(z)$ and $N(z)$ denote the respective sets of zeroes and \emph{poles} of $z$, then the \emph{zero} and \emph{pole divisors} of $z$ are, respectively, 
\[
(z)_0:=\sum\limits_{P\in Z(z)}v_{P}(z)P \mbox{ and } 
(z)_\infty:=\sum\limits_{P\in N(z)}-v_{P}(z)P.
\]
Using this notation, the principal divisor $(z)$ can be written as $(z)=(z)_0-(z)_\infty$. It is well-known that for any rational function $z$, the degree of $(z)$ is equal to zero.

For a divisor $G$ on the curve $\cal X$, we define the \emph{Riemann} and \emph{differential spaces} associated with $G$, respectively, as
\begin{align}
{\cal L}(G) &:=\{z\in \F_q({\cal X}) \setminus \{0\} : (z)+G\ge 0\}\cup \{0\} \mbox{ and} \label{eq:RiemSpace}\\
{\Omega}(G) &:=\{\omega\in \Omega \setminus \{0\} : (\omega)-G\ge 0\}\cup \{0\}, \label{eq:DiffSpace}
\end{align}
where $\Omega:=\{z \, dx : z \in \F_q({\cal X})\}$ is the set of \emph{differential forms} on $\cal X$. Both ${\cal L}(G)$ and ${\Omega}(G)$ are finite-dimensional vector spaces. Let $\ell(G)$ denote the dimension of ${\cal L}(G)$. For any differential form $\omega$ on $\cal X$, there exists a unique rational function $z$ on $\cal X$ such that $\omega=z \, dt$, where $t$ is a \emph{separating element}.
The divisor class of a nonzero differential form is called the \emph{canonical divisor}. Any canonical divisor $K$ on a rational curve has degree $-2$.

We are now ready to define two codes. Let $D :=P_{1}+\cdots+P_{n}$, with $P_{i}$ being a place of degree one  for each $1 \le i \le n$. If $G$ is a divisor having a disjoint support with that of $D$, then the AG and differential AG codes with respect to $D$ and $G$ are defined, respectively, by
\begin{align}
C_{\cal L}(D,G) &:=\{(z(P_{1}),\hdots,z(P_{n})):z\in {\cal L}(G)\} \mbox{ and } \label{eq:CDG}\\
C_{\Omega}(D,G) & :=\{({\rm Res}_{P_{1}}(\omega),\hdots,{\rm Res}_{P_{n}}(\omega)):\omega\in {\Omega}(G-D)\}, \label{eq:ODG}
\end{align}
where ${\rm Res}_{P}(\omega)$ denotes the \emph{residue} of $\omega$ at point $P$.

The parameters of a rational AG code $C_{\cal L}(D,G)$ are given in \cite[Theorem 2.2.2, Corollary 2.2.3]{Stich}. 
\begin{enumerate}
\item A rational AG code $C_{\cal L}(D,G)$ has
\begin{equation*}
k=\ell(G)-\ell(G-D)\text { and } d\ge n-\deg (G).
\end{equation*}
\item Moreover, if $ -2 < \deg(G) < n$, then $C_{\cal L}(D,G)$ has
\begin{equation*}
k=\deg (G)+1\text{ and } d\ge n-\deg (G).
\label{eq:distance}
\end{equation*}
\end{enumerate}

The Euclidean dual of $C_{\cal L}(D,G)$ is useful for proving some results related to the Hermitian hulls.
\begin{lemma}{\rm \cite[Theorem 2.2.8, Proposition 2.2.10] {Stich}}\label{lem:dual2} 
If a differential form $\omega $ satisfies
$v_{P_i}(\omega) =-1$ for $1\le i \le n$ and 
${\rm Res}_{P_i}(\omega) =v_i \neq 0$ for $1\le i \le n$, then 
\begin{equation}\label{eq:EuclideanDual}
C_{\cal L}(D,G)^{\perp_{\rm E}} = C_{\Omega}(D,G) = {\bf v} \, C_{\cal L}(D,H)
\end{equation}
for $H=D-G+(\omega)$ and ${\bf v}=(v_1,\ldots,v_n)$. 
\end{lemma}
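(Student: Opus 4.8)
The plan is to establish the two claimed equalities separately. The first, $C_{\cal L}(D,G)^{\perp_{\rm E}} = C_{\Omega}(D,G)$, is the classical duality between evaluation and differential AG codes; I would prove one inclusion using the Residue Theorem and then force equality by a Riemann--Roch dimension count. The second, $C_{\Omega}(D,G) = {\bf v}\,C_{\cal L}(D,H)$, converts differentials into functions by dividing out the fixed differential $\omega$ and reduces the residue to a pointwise evaluation via the local simple-pole residue formula.

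For the first equality I would pick arbitrary $z \in {\cal L}(G)$ and $\eta \in \Omega(G-D)$ and consider the product differential $z\eta$. Since $(z) \ge -G$ and $(\eta) \ge G-D$, we get $(z\eta) \ge -D$; in particular $z\eta$ is regular away from $\{P_1,\ldots,P_n\}$, so its only possible nonzero residues occur at those points. The Residue Theorem then gives $\sum_{i=1}^n {\rm Res}_{P_i}(z\eta) = 0$. Because each $P_i$ has degree one and $z$ is regular there, ${\rm Res}_{P_i}(z\eta) = z(P_i)\,{\rm Res}_{P_i}(\eta)$, so the Euclidean inner product of the evaluation vector of $z$ with the residue vector of $\eta$ vanishes, yielding $C_{\Omega}(D,G) \subseteq C_{\cal L}(D,G)^{\perp_{\rm E}}$. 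To upgrade to equality I would compare dimensions. Writing $i(A) := \dim\Omega(A)$ and using Riemann--Roch in the form $\ell(A) = \deg(A) + 1 - g + i(A)$ with $g$ the genus (here $g=0$ for the rational curve), together with the fact that the residue map on $\Omega(G-D)$ has kernel exactly $\Omega(G)$ (a differential in $\Omega(G-D)$ has all residues at the $P_i$ vanish precisely when it has no pole there), one computes $\dim C_{\Omega}(D,G) = i(G-D) - i(G)$ and $\dim C_{\cal L}(D,G) = \ell(G) - \ell(G-D)$. A short manipulation shows these sum to $\deg(D) = n$, and since the Euclidean dual has dimension $n - \dim C_{\cal L}(D,G)$, the inclusion becomes an equality.

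For the second equality I would use that multiplication by the fixed nonzero differential $\omega$ is an $\F_q$-linear isomorphism from the function field onto $\Omega$. For a function $f$ we have $(f\omega) = (f) + (\omega) \ge G - D$ if and only if $(f) \ge G - D - (\omega) = -H$, that is, $f\omega \in \Omega(G-D) \iff f \in {\cal L}(H)$, where $H = D - G + (\omega)$. Thus $f \mapsto f\omega$ maps ${\cal L}(H)$ bijectively onto $\Omega(G-D)$. It remains to track residues: since $v_{P_i}(\omega) = -1$ while $v_{P_i}(f\omega) \ge -1$ forces $v_{P_i}(f) \ge 0$, the function $f$ is regular at each $P_i$, and the local expansion at the simple pole gives ${\rm Res}_{P_i}(f\omega) = f(P_i)\,{\rm Res}_{P_i}(\omega) = f(P_i)\,v_i$. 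Hence the residue vector of $f\omega$ equals ${\bf v}$ times the evaluation vector $(f(P_1),\ldots,f(P_n))$ of $f$, which is exactly $C_{\Omega}(D,G) = {\bf v}\,C_{\cal L}(D,H)$.

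I expect the main obstacle to be the local bookkeeping at the points $P_i$ that underlies both equalities: verifying that a differential in $\Omega(G-D)$ has at worst a simple pole at each degree-one place $P_i$ so that its residue is genuinely $z(P_i)$ (respectively $f(P_i)\,v_i$) times the residue of the cofactor, and pinning down the kernel of the residue map as $\Omega(G)$ for the dimension count. Once the simple-pole residue formula and the isomorphism $f \mapsto f\omega$ are in place, the remainder is a routine application of the Residue Theorem and Riemann--Roch.
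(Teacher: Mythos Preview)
Your proposal is correct and follows exactly the standard argument from Stichtenoth's textbook (Theorem 2.2.8 and Proposition 2.2.10), which is the reference the paper cites; the paper itself does not supply a proof but simply imports the lemma as a known result. Your two-step structure---Residue Theorem plus Riemann--Roch for $C_{\cal L}(D,G)^{\perp_{\rm E}} = C_{\Omega}(D,G)$, then the isomorphism $f \mapsto f\omega$ for $C_{\Omega}(D,G) = {\bf v}\,C_{\cal L}(D,H)$---is precisely the textbook route, and the local pole bookkeeping you flag is handled by the disjoint-support hypothesis on $D$ and $G$ together with $v_{P_i}(\omega)=-1$.
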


Given ${\bf a}=(a_1,\ldots,a_n)$ and ${\bf b} = (b_1,\ldots,b_n)$ in $\F_q^n$ with $a_1,\ldots,a_n$ being all distinct and $b_1,\ldots,b_n$ being all nonzeroes, the generalized Reed-Solomon (GRS) code 
\[
{\rm GRS}_{k}({\bf a},{\bf b}):= 
\{b_1 f(a_1),\ldots,b_nf(a_n) : f\in \F_{q}[x], \deg(f) \le k-1\}
\]
is an MDS code. We know from \cite[Proposition 2.3.3]{Stich} that any rational AG code $C_{\cal L}(D,G)$ with $\deg(G)=k-1$ is equivalent to $GRS_k({\bf a},{\bf b})$, with
\begin{align*}
a_i &=x(P_i) \mbox{ and}\\
b_i &= u(P_i)
\mbox{ for } u(x)\in \F_q(x) \mbox{, with }
(u)=(k-1)P_{\infty}-G .
\end{align*}

It is then immediate to confirm, for $0\le j\le k-1$, that the vectors
\[
(u \, x^j(P_1), \ldots,u \, x^j(P_{n}))=(b_1a_1^j,\ldots,b_{n}a_{n}^j)
\]
form a basis of $C_{\cal L}(D,G)$, allowing us to construct, for $C_{\cal L}(D,G)$, a generator matrix 
\[
\mathcal{G}_{\cal L}(D,G):=
\begin{pmatrix}
b_1&b_2&\hdots&b_{n}\\
b_1 \, a_1& b_2 \, a_2& \cdots &b_{n} \, a_{n}\\
\vdots&\vdots&\cdots&\vdots\\
b_1 \, a_1^{k-2}& b_2 \, a_2^{k-2} & \ddots & b_{n} \, a_{n}^{k-2}\\
b_1 \, a_1^{k-1} & b_2 \, a_2^{k-1} & \cdots & b_{n} \, a_{n}^{k-1}\\
\end{pmatrix}.
\]

\section{Main results}\label{sec:main}

Let $\cal X$ be a rational curve and let $\F_{q^2} ({\cal X})$ be a rational function field, that is, $\F_{q^2} ({\cal X}) = \F_{q^2} (x)$. For a fixed $U\subseteq \F_{q^2}$, the set of its affine coordinates is ${\cal P}_U=\{(a,b)\in {\cal X}(\F_{q^2}):a\in U\}$. 
To study one-point AG codes with explicit hull dimensions, we examine the properties of 
$h(x):=\prod\limits_{\alpha\in U}(x-\alpha)$ and its derivative $h'(x)$. For any point $P=(a,b)$, let $x_P : = (x-a)$. From the choice of $U$, the image of $x_P$ at the local ring at $P$ is a uniformizing parameter. Given the differential form $\omega=\frac{dx}{h(x)}$, for any $P=(a,b)\in {\cal P}_U$, we obtain 
\begin{align*}
\omega &=\frac{d \, x_P}{\prod\limits_{\alpha\in U}(x_P+a-\alpha)} 
=\frac{1}{h'(x_P+a)} \, \frac{h'(x_P+a)}{\prod\limits_{\alpha\in U}(x_P+a-\alpha)} \, d \, x_P\\
&=\frac{1}{h'(x_P+a)} \, \sum\limits_{\alpha\in U}\frac{1}{x_P+a-\alpha} \, d \, x_P.
\end{align*}
Hence, $\omega$ has poles of order one at each $P \in {\cal P}_U$ and its residue at $P$ is ${\rm Res}_{P}(\omega)=\frac{1}{h'(a)}$.
 
From hereon, we fix the differential form $\omega=\frac{dx}{h(x)}$, with $h(x)=\pm \prod\limits_{\alpha\in U} (x-\alpha)$, to guarantee that the conditions on $\omega$ in Lemma \ref{lem:dual2} are met. 
If $G= k \, O$ and $H=D-G+(\omega)$, then $H=(n-k-2) \, O$, with $n$ being the cardinality of the set of evaluation points $U$. We now provide a good lower bound on the Hermitian Hull Dimensions of Rational AG Codes.

For ${\bf v}=(v_1,\ldots,v_n) \in \left(\F_{q^2}^*\right)^n$, the generalized algebraic geometry code associated with ${\bf v}$ is
\[
{\bf v} \, C_{\cal L}(D,G):=\{ \left(v_1 \, z(P_{1}),\ldots, v_n \, z(P_{n})\right) : z\in {\cal L}(G)\}.
\]
It is straightforward to confirm that $ C_{\cal L}(D,G)$ and ${\bf v} \, C_{\cal L}(D,G)$ have the same parameters. 
\begin{lemma}\label{lem:dim-hull} 
Let divisors $D=P_1+\cdots+P_n$ and $G=k \, O$ be such that ${\rm supp}(D)\cap {\rm supp}(G)=\emptyset$. Given a differential form $\omega$, let $H:=D-G+(\omega)$. Let $V_1=\{\x^i:0\le i\le k\}$ and $V_2=\{\x^i:0\le i\le n-k-2\}$ be respective bases of ${\cal L}(G)$ and ${\cal L}(H)$.
Let 
\begin{align}
N&=\min_{ 1 \leq i < q^2} \{i : \x^i(P_j)=1 \mbox{ for all } 1\le j\le n\} \mbox{ and}\label{eq:N}\\
L(N)&=\{i \Mod N : \x^i \in V_1^q\cap V_2\}.\label{eq:L}
\end{align}
If there is a vector ${\bf v}=(v_1,\hdots,v_n)\in (\F_{q^2}^*)^n$ such that ${\rm Res}_{P_i}(\omega)=v_i^{q+1}$ for any $1\le i \le n$, then the Hermitian hull dimension of ${\bf v} \, C_{{\cal L}}(D, G)$ is $\ell\ge |L(N)|$.
\end{lemma}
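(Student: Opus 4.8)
The plan is to produce an explicit subspace of $\mathrm{Hull}_{\rm H}({\bf v}\,C_{\cal L}(D,G))$ of dimension $|L(N)|$ by pairing off basis vectors of the code with basis vectors of its Hermitian dual. First I would compute the Hermitian dual of ${\bf v}\,C_{\cal L}(D,G)$. By Lemma \ref{lem:dual2}, since $\omega$ has simple poles at each $P_i$ with nonzero residue ${\rm Res}_{P_i}(\omega)=v_i^{q+1}$, we have $C_{\cal L}(D,G)^{\perp_{\rm E}}={\bf w}\,C_{\cal L}(D,H)$ with $H=D-G+(\omega)$ and ${\bf w}=({\rm Res}_{P_1}(\omega),\ldots,{\rm Res}_{P_n}(\omega))=({\bf v}^q){\bf v}$. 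The Hermitian dual is obtained from the Euclidean dual by applying the coordinatewise $q$-th power map: $\left({\bf v}\,C_{\cal L}(D,G)\right)^{\perp_{\rm H}}=\left(\left({\bf v}\,C_{\cal L}(D,G)\right)^q\right)^{\perp_{\rm E}}=\left({\bf v}^q\right)^{-1}\cdot\left(C_{\cal L}(D,G)^{\perp_{\rm E}}\right)$, so using ${\bf w}={\bf v}^q{\bf v}$ this simplifies to ${\bf v}\,C_{\cal L}(D,H)$. Thus both ${\bf v}\,C_{\cal L}(D,G)$ and its Hermitian dual are generalized rational AG codes \emph{twisted by the same vector} ${\bf v}$, which is the structural reason the hull is accessible.

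Next I would translate membership in the hull into a condition on the underlying function spaces. A vector in ${\bf v}\,C_{\cal L}(D,G)$ has the form $(v_1 f(P_1),\ldots,v_n f(P_n))$ for $f\in{\cal L}(G)=\langle V_1\rangle$; it lies in the Hermitian dual ${\bf v}\,C_{\cal L}(D,H)$ iff there is $g\in{\cal L}(H)=\langle V_2\rangle$ with $v_i f(P_i)=v_i g(P_i)$ for all $i$, i.e. $f(P_i)=g(P_i)$ for all $i$. Actually the cleaner route is: $({\bf v}{\bf f})\in\left({\bf v}\,C_{\cal L}(D,G)\right)^{\perp_{\rm H}}$ iff $({\bf v}{\bf f})^q={\bf v}^q{\bf f}^q\in\left({\bf v}\,C_{\cal L}(D,G)\right)^{\perp_{\rm E}}={\bf v}^q{\bf v}\,C_{\cal L}(D,H)$, which holds iff ${\bf f}^q\in{\bf v}\,C_{\cal L}(D,H)/{\bf 1}$... — I will instead work directly with evaluation vectors. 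The point of $N$ in \eqref{eq:N} is that $\x^N$ evaluates to the all-ones vector on $D$, so the evaluation map on monomials factors through exponents modulo $N$: $\x^i(P_j)=\x^{i\bmod N}(P_j)$. Hence for each residue class $i\in L(N)$ defined in \eqref{eq:L}, there exist $a$ with $0\le a\le k$ and $b$ with $0\le b\le n-k-2$ such that $a\equiv qi$ (so $\x^a\in V_1$ and $\x^{a}$ is a $q$-th power coming from $\x^i\in V_1$ with $\x^{qi}\bmod N$... ) — concretely $\x^i\in V_1^q\cap V_2$ means $\x^i=(\x^j)^q$ for some $\x^j\in V_1$, so $\x^{qj}$ and $\x^i$ agree on $D$ modulo $N$, and simultaneously $\x^i\in V_2$. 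I would then check that the evaluation vector ${\bf c}_i:=(v_1\x^{j}(P_1)^q\cdot ?,\ldots)$ — more precisely the codeword $({\bf v}\,\x^{j})\in{\bf v}\,C_{\cal L}(D,G)$ has $q$-th power $({\bf v}^q\,\x^{qj})=({\bf v}^q\,\x^{i\bmod N})$ on $D$, which sits in ${\bf v}^q{\bf v}\,C_{\cal L}(D,H)$ precisely because $\x^{i}\in V_2={\cal L}(H)$ basis; dividing by ${\bf v}^q{\bf v}$ this lands in $C_{\cal L}(D,H)$, giving the Hermitian orthogonality to all of ${\bf v}\,C_{\cal L}(D,G)$. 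So each class in $L(N)$ yields a codeword of ${\bf v}\,C_{\cal L}(D,G)$ lying in its Hermitian dual, i.e. in the hull.

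Finally I would argue these hull codewords are linearly independent, giving $\ell\ge|L(N)|$. Independence follows because distinct residue classes $i\ne i'$ in $L(N)$ (both in the range $0\le i<N$) produce codewords whose underlying functions $\x^{j}$ and $\x^{j'}$ have distinct evaluation vectors on $D$: if $\x^{j}(P)=\x^{j'}(P)$ for all $P\in D$ then $\x^{j-j'}$ is the all-ones vector on $D$, forcing $N\mid (j-j')$, hence $N\mid q(j-j')$, hence $qj\equiv qj'\pmod N$; since $\gcd(q,N)\mid\gcd(q,q^2-1)=1$ we get $j\equiv j'$, contradicting $i\ne i'$ after reducing mod $N$. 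The main obstacle — and the step I would be most careful about — is bookkeeping the three-way compatibility: an exponent $i$ must simultaneously (i) be realizable mod $N$ by some $\x^j\in V_1$ after multiplying by $q$, i.e. $\x^i\in V_1^q$, (ii) lie in $V_2={\cal L}(H)$, and (iii) have the $q$-th power and twist vectors cancel correctly against ${\bf v}^{q+1}={\bf w}$; getting the modular reductions and the MDS degree ranges ($0\le a\le k$, $0\le b\le n-k-2$) to line up without off-by-one errors, and confirming that the map from $L(N)$ to hull codewords is well-defined and injective, is where the real work lies. Everything else is a direct application of Lemma \ref{lem:dual2} and the dictionary between evaluation codes and function spaces.
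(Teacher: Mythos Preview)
Your strategy—reduce the Hermitian hull of ${\bf v}C$ to an intersection of two monomial evaluation codes carrying the same twist, then exhibit $|L(N)|$ independent monomial evaluations in that intersection—is exactly the paper's approach, but your dual computations contain two slips that break the argument as written.

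In paragraph one you pass from $\big(({\bf v}C)^q\big)^{\perp_{\rm E}}=({\bf v}^qC^q)^{\perp_{\rm E}}$ to $({\bf v}^q)^{-1}\,C^{\perp_{\rm E}}$, silently replacing $(C^q)^{\perp_{\rm E}}$ by $C^{\perp_{\rm E}}$. These differ; the correct identity $(C^q)^{\perp_{\rm E}}=(C^{\perp_{\rm E}})^q$ gives $({\bf v}C)^{\perp_{\rm H}}={\bf v}\,(C')^q$, not ${\bf v}\,C'$. In paragraph two you then check $({\bf v}\x^j)^q\in{\bf v}^{q}{\bf v}\,C'$, but in fact $({\bf v}C)^{\perp_{\rm E}}={\bf v}^{-1}C^{\perp_{\rm E}}={\bf v}^{-1}{\bf v}^{q+1}C'={\bf v}^qC'$, one factor of ${\bf v}$ short of what you wrote; with your extra factor, the implication ``$\x^i\in V_2\Rightarrow{\bf v}^q\x^i\in{\bf v}^{q+1}C'$'' is false. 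With the correct factor the condition collapses to $\x^{qj}$ (as an evaluation vector on $D$) lying in $C'$, i.e.\ $qj\bmod N\in\{0,\ldots,n-k-2\}$, which is exactly the membership $\x^i\in V_1^q\cap V_2$ read modulo $N$ that you are aiming for.

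The paper avoids both pitfalls by applying the $q$-th power to the entire hull at once,
\[
\big(({\bf v}C)\cap({\bf v}C)^{\perp_{\rm H}}\big)^q
=({\bf v}^qC^q)\cap({\bf v}C)^{\perp_{\rm E}}
=({\bf v}^qC^q)\cap({\bf v}^qC')
={\bf v}^q\,(C^q\cap C'),
\]
and then cites \cite[Proposition~11]{PerPel} for $\dim(C^q\cap C')\ge|L(N)|$. Your second and third paragraphs are effectively a direct proof of that cited bound; for the independence step it suffices to note that $L(N)\subseteq\{0,\ldots,n-k-2\}$ (because $n-k-2<N$) and that the evaluation vectors of $\x^0,\ldots,\x^{n-k-2}$ are already a basis of $C'$, so no separate $\gcd$ argument is needed.
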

\begin{proof} 
For brevity, let $C$ stand for $C_{{\cal L}}(D,G)$ and $C'$ for $C_{{\cal L}}(D,H)$. Let 
\[
{\rm Res}(\omega)=({\rm Res}_{P_1}(\omega),\hdots,{\rm Res}_{P_n}(\omega)).
\]
After some computation, we get 
\[
(({\bf v} \, C) \cap ({\bf v} \, C)^{\perp_{\rm H}})^q 
= ({\bf v^q} \, C^q) \cap ({\bf v} \, C)^{\perp_{\rm E}} 
=({\bf v}^q \, C^q) \cap 
\Big( \Big({\rm Res}(\omega) \, \frac{1}{{\bf v}} \Big) \, C' \Big) = 
{\bf v}^q \,( C^q \cap  C'),
\]
where the second equality follows from Lemma \ref{lem:dual2}. 
From the last equality and \cite[Proposition 11]{PerPel}, the Hermitian hull dimension of ${\bf v} \, C$ is $\ell \geq |\{i \Mod N: \x^i \in V_1^q\cap V_2\}|$.
\end{proof}

The next theorem gives an explicit formula to compute the Hermitian hull dimension.
\begin{theorem}\label{thm:2new} 
Let $q$ be a prime power and let us assume Lemma \ref{lem:dim-hull}, with $N$ defined as in \eqref{eq:N}. Let $n = n_0 \, q + q_1$, with $1 \leq n_0\le q-1$, $0\le q_1 \le q-1$, and $k=k_0 \, q + q_0$, with $1 \leq k_0 < \lfloor (q_1+n_0 \, q - q_0)/q \rfloor$, $0\le q_0\le q-1$, and $q_1 - q_0 \le 1$.
\begin{enumerate}
\item If $L(N)$ is as in \eqref{eq:L} with $N=q^2-1$, then ${\bf v} \, C_{{\cal L}}(D, G)$ is an $[n,k+1,n-k]_{q^2}$ MDS code whose Hermitian hull has dimension $\ell\ge |L(q^2-1)|$, where
\begin{equation}
|L({q^2-1})|=
\begin{cases}
k_0(n_0-k_0)+q_0+1 \mbox{, if } k_0\le q_1+q-q_0-2 \mbox{ and } q_0\le n_0-k_0-2,\\
(k_0+1)(n_0-k_0) \mbox{, if } k_0\le q_1+q-q_0-2 \mbox{ and } \\
\qquad q_0\ge n_0-k_0-1, k_0(n_0-k_0-1)+q_1+q \mbox{, with } \\
\qquad k_0> q_1+q-q_0-2 \mbox{ and } q_0< n_0-k_0-2,\\
(n_0-k_0-1)(k_0+1)+(q_1+q-q_0-1) \mbox{, if } \\
\qquad k_0> q_1+ q-q_0-2 \mbox{ and }q_0 \ge n_0-k_0-2.\\
\end{cases}
\label{eq:ell-general}
\end{equation}
\item If $N$ is a proper divisor of $q^2-1$, then the Hermitian hull of ${\bf v} \, C_{{\cal L}}(D, G)$ has dimension $\ell \ge |L(N)| \geq |L(q^2-1)|$, with $|L(q^2-1)|$ as given in \eqref{eq:ell-general}.
\end{enumerate}
\end{theorem}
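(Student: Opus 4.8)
The plan is to reduce everything to a combinatorial count of the set $L(N)$ appearing in Lemma~\ref{lem:dim-hull}, and then to evaluate that count explicitly when $N = q^2-1$. By Lemma~\ref{lem:dim-hull}, the Hermitian hull dimension of ${\bf v}\,C_{\cal L}(D,G)$ is at least $|L(N)|$, where $L(N) = \{i \bmod N : \x^i \in V_1^q \cap V_2\}$, with $V_1 = \{\x^i : 0 \le i \le k\}$ and $V_2 = \{\x^i : 0 \le i \le n-k-2\}$. Since $V_1^q = \{\x^{iq} : 0 \le i \le k\}$, membership $\x^j \in V_1^q$ means $j \equiv iq \pmod{N}$ for some $0 \le i \le k$, and membership $\x^j \in V_2$ (after reduction mod $N$) means the reduced exponent lies in $\{0,1,\dots,n-k-2\}$. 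So $L(q^2-1)$ is exactly the set of residues $r$ modulo $q^2-1$ such that $r \equiv iq \pmod{q^2-1}$ for some $i$ with $0 \le i \le k$ and simultaneously $0 \le r \le n-k-2$. The parameter hypotheses ($-2 < \deg G = k < n$, and $H = (n-k-2)O$ has degree in range) guarantee via \cite[Theorem 2.2.2, Corollary 2.2.3]{Stich} that ${\bf v}\,C_{\cal L}(D,G)$ is $[n, k+1, n-k]_{q^2}$ MDS, which disposes of the first clause of part~(1).

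The heart of the argument is the exact count of $|L(q^2-1)|$. First I would understand the map $i \mapsto iq \bmod (q^2-1)$ on $0 \le i \le k$. Writing $i = i_1 q + i_0$ with $0 \le i_0 \le q-1$, one has $iq = i_1 q^2 + i_0 q \equiv i_1 + i_0 q \pmod{q^2-1}$, so the reduced exponent is $i_0 q + i_1$ — a "digit swap" in base $q$ — which lies in $[0, q^2-2]$ precisely when $(i_0, i_1) \ne (q-1, q-1)$. Since $k = k_0 q + q_0 < (q-1)q$ by hypothesis, the value $i = q^2-1$ is excluded, so as $i$ runs over $0,\dots,k$ the swapped values $i_0 q + i_1$ are all distinct and the condition $\x^j \in V_1^q$ amounts to: $j = i_0 q + i_1$ where either $i_0 \le k_0 - 1$ and $0 \le i_1 \le q-1$, or $i_0 = k_0$ and $0 \le i_1 \le q_0$. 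Intersecting with $0 \le j \le n-k-2$ — and here $n - k - 2 = (n_0 - k_0)q + (q_1 - q_0) - 2$, whose base-$q$ digits depend on the sign of $q_1 - q_0 - 2$, i.e. on whether $q_0 \ge n_0 - k_0 - 2$ or not, and additionally on the relative size of the "high digit" $k_0$ versus the available range, which is where the case split $k_0 \lessgtr q_1 + q - q_0 - 2$ comes from — one obtains a sum of the form $\sum_{i_0} (\text{number of admissible } i_1)$ that telescopes into the stated closed forms. I expect to organize the count by the high digit $i_0$ of $j$, splitting off the boundary layers $i_0 = k_0$ and the top of the range for $j$, and carefully handling the carry in $n-k-2$.

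For part~(2), when $N$ is a proper divisor of $q^2-1$, reduction modulo $N$ can only identify exponents that were previously distinct modulo $q^2-1$, hence it cannot enlarge the constraint set: every residue class mod $q^2-1$ in $L(q^2-1)$ maps to a residue class mod $N$ still satisfying both membership conditions (the conditions $\x^i \in V_1^q$ and $\x^i \in V_2$ are about the existence of a representative in the relevant ranges, and $N \mid q^2-1$ means $\x^{q^2-1}(P_j)=1$ is implied, so the defining properties persist), giving a surjection $L(q^2-1) \twoheadrightarrow$ (the image), and by Lemma~\ref{lem:dim-hull} the hull dimension is bounded below by $|L(N)|$, which in turn is $\ge |L(q^2-1)|$. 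So the chain $\ell \ge |L(N)| \ge |L(q^2-1)|$ follows, and \eqref{eq:ell-general} supplies the last quantity.

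The main obstacle will be the bookkeeping in the case analysis for $|L(q^2-1)|$: correctly tracking the base-$q$ digits of $n-k-2$ under the possible borrow (when $q_1 - q_0 - 2 < 0$), and matching each of the four regimes to the right truncation of the "digit-swapped" arithmetic progression $\{i_0 q + i_1\}$. Everything else is either a direct application of Lemma~\ref{lem:dim-hull} and the Stichtenoth parameter formulas, or the monotonicity observation for part~(2).
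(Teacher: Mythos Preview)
Your plan matches the paper's proof. For part~(1) the paper partitions the exponent sets of $V_1^q$ and $V_2$ in base~$q$ exactly as your digit-swap identity $iq \equiv i_0 q + i_1 \pmod{q^2-1}$ predicts, and then carries out the same four-way case analysis on $(k_0,q_0)$ against the thresholds $r_1 := q_1+q-q_0-2$ and $n_0-k_0-2$; your diagnosis of where those thresholds originate (the borrow in the base-$q$ expansion of $n-k-2$ and the truncated layer $i_0=k_0$) lines up with the paper's explicit listings \eqref{eq:L1}--\eqref{eq:L4}.

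One point to tighten in part~(2): saying the reduction gives ``a surjection $L(q^2-1)\twoheadrightarrow(\text{the image})$'' is tautological and does not yield $|L(N)|\ge |L(q^2-1)|$. What is actually needed---and what makes the paper's asserted inclusion \eqref{eq:LinLprime} carry cardinality information---is that the reduction map $\Z/(q^2-1)\Z\to\Z/N\Z$ is \emph{injective} on $L(q^2-1)$: every element of $L(q^2-1)$ already lies in $\{0,\dots,n-k-2\}$, and since the evaluation points are (nonzero) $N$-th roots of unity one has $n\le N$, so no two such elements collide modulo~$N$. With that correction your argument for part~(2) is the same as the paper's.
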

\begin{proof} 
Based on $V_1 =\{\x^i : 0\le i\le k\}$ and 
$V_2 =\{\x^i : 0\le i\le n-k-2\}$, we partition $V_1^q$ and $V_2$ as
\begin{align*}
V_1^q & =\left(\bigcup_{s=0}^{q-1} \left\{\x^{r+qs} : 0\le r\le k_0-1\right\}\right) \bigcup T_1 \mbox{ and}\\  
V_2 &= \left(\bigcup_{r=0}^{q-1} \left\{\x^{r+qs} : 0\le s\le n_0-k_0-2\right\}\right) \bigcup T_2,
\end{align*}
where $T_1 =\{\x^{k_0+qs}:0\le s\le q_0\}$ and 
$T_2 =\left\{\x^{(n_0-k_0-1)q+r}:0\le r\le q_1+q-q_0-2\right\}$. Expressing $r_1=q_1+q-q_0-2$, we write the respective sets of exponents modulo $N$ of $\x$ in $V_1^q$ and $V_2$ as in \eqref{eq:Part1} and \eqref{eq:Part2}.
\begin{multline}\label{eq:Part1}
\left\{
0, 0+q, \ldots, 0+q_0 \, q, \ldots, 0+(q-1) \,q, \quad 1, 1+q , \ldots , 1+q_0 \, q, \ldots, 1+(q-1) \, q , \ldots, \right. \\
\left. k_0-1 , k_0-1 + q,  \ldots,  k_0-1 + q_0 \, q, \ldots, k_0-1 + (q-1) \, q, \quad {k_0}, {k_0+q}, \ldots, k_0 + q_0 \, q \right\}
\end{multline}
and
\begin{multline}\label{eq:Part2}
\left\{
0, 0+q, \ldots, 0+(n_0-k_0-2) \, q,  0+(n_0-k_0-1) \, q, \right. \\
\left. \quad 1, 1+q, \ldots , 1+(n_0-k_0-2) \, q,  1+(n_0-k_0-1) \, q , \ldots, \right. \\
\left. \ldots , r_1 , r_1+q , \ldots , r_1+(n_0-k_0-2) \, q,   r_1+(n_0-k_0-1) \, q, \ldots, \right. \\
\left. \ldots, q-1, q-1+q, \ldots , q-1+ (n_0-k_0-2) \, q
\right\}
\end{multline}

If $N=q^2-1$, then the set of exponents modulo $N$ of $\x$ in the intersection basis $V_1^q\cap V_2$ has one of the following four forms.
\begin{enumerate}
\item If $k_0\le r_1$ and $q_0\le n_0-k_0-2$, then 
\begin{multline}\label{eq:L1}
L(q^2-1) = \left\{ 0, 0+q, \ldots, 0+q_0 \, q, \ldots 0+(n_0-k_0-1) \, q,
\right. \\
\left. \quad 1, 1+q, \ldots, 1+q_0 \, q, \ldots, 1+(n_0-k_0-1) \, q,
\ldots, \right. \\
\left. \ldots,  k_0-1, k_0-1+q, \ldots, k_0-1+q_0 \, q , \ldots, k_0-1+(n_0-k_0-1)\, q, \right. \\
\left. k_0, k_0+q, \ldots, k_0+ q_0 \, q
\right\},
\end{multline}
with $|L(q^2-1)|=k_0(n_0-k_0)+q_0+1$.

\item If $k_0\le r_1$ and $q_0\ge n_0-k_0-1$, then 
\begin{multline}\label{eq:L2}
L(q^2-1)= \left\{ 0, 0+q, \ldots, 1+(n_0-k_0-2) \, q , 0+(n_0-k_0-1) \, q,
\right. \\
\left. \quad 1, 1+q, \ldots, 1+(n_0-k_0-2) \, q, 1+(n_0-k_0-1) \, q, 
\ldots, \right. \\
\left. \ldots k_0-1, k_0-1+q, \ldots, k_0-1+(n_0-k_0-2) \, q, k_0-1+(n_0-k_0-1) \, q, \right. \\
\left. k_0, k_0+q, \ldots, k_0+(n_0-k_0-2) \, q , k_0+(n_0-k_0-1) \,q \right\},
\end{multline}
with $|L(q^2-1)|=(k_0+1)(n_0-k_0)$.

\item If $k_0> r_1$ and $q_0\le n_0-k_0-2$, then
\begin{multline}\label{eq:L3}
L(q^2-1)= \left\{ 0, \ldots, 0+q_0 \, q, \ldots, 0+(n_0-k_0-2) \, q, 0+ (n_0-k_0-1) \, q, \right. \\
\left. \quad 1, \ldots, 1+q_0 \, q, \cdots, 1+ (n_0-k_0-2) \, q, 
1+(n_0-k_0-1) \, q, \ldots, \right. \\
\left. \ldots, r_1, \ldots , r_1+ q_0 \, q, \ldots, r_1+(n_0-k_0-2) \, q, r_1+(n_0-k_0-1) \, q, \ldots, \right. \\
\left. \ldots, {k_0-1}, \ldots, k_0-1+q_0 \, q, \ldots, k_0-1+ (n_0-k_0-2) \, q, \right. \\
\left. k_0, \ldots, k_0+q_0 \, q, \right\},
\end{multline}
with $|L(q^2-1)|=k_0 \, (n_0-k_0-1) + q_1 + q$.

\item If $k_0> r_1$ and $q_0\ge n_0-k_0-1$, then 
\begin{multline}\label{eq:L4}
L(q^2-1)= \left\{ 0, 0+q, \ldots, 0+(n_0-k_0-2) \, q, 0+(n_0-k_0-1) \, q,
\right. \\
\left. \quad 1, 1+q, \ldots, 1+(n_0-k_0-2) \, q , 1+ (n_0-k_0-1) \, q,
\ldots, \right.\\
\left. \ldots, r_1, q-q_0-2+q, \ldots, r_1+(n_0-k_0-2) \, q, 
r_1+(n_0-k_0-1)\, q, \ldots, \right.\\
\left. \ldots,  k_0, k_0+q, \ldots, k_0+(n_0-k_0-2) \, q \right\},
\end{multline}
with $|L(q^2-1)|=(n_0-k_0-1)(k_0+1)+(q_1+q-q_0-1)$.
\end{enumerate}
We use Lemma \ref{lem:dim-hull} to settle the first assertion of the theorem, with $\ell\ge |L(q^2-1)|$, with $|L(q^2-1)|$ as defined in \eqref{eq:ell-general}.

If $N$ is a proper divisor of $q^2-1$, then 
\begin{equation}\label{eq:LinLprime}
\{i\Mod {(q^2-1)}:\x^i\in V_1^q\cap V_2\} \subseteq 
\{i\Mod N: \x^i \in V_1^q\cap V_2\},
\end{equation}
which ensures that the second part follows from Lemma \ref{lem:dim-hull}.
\end{proof}

\begin{remark} 
We state four useful facts.
\begin{enumerate}
\item For any $1 \leq i \leq n$, let $\omega$ be such that ${\rm Res}_{P_i}(\omega)$ is a $(q+1)^{\rm st}$ power in $\F_{q^2}$. Since $|\{i \Mod N: \x^i \in V_1^q\cap V_2\}|$ is easier to compute than ${\rm rank}(G \, G^\dag)$, with $G$ being a generator matrix of the code, Theorem \ref{thm:2new} supplies a better way to determine a good lower bound on the Hermitian hull dimension of ${\bf v} \, C_{{\cal L}}(D, G)$ for any divisor $N$ of $q^2-1$. 

\item Pereira {\it et al.} \cite{PerPel} determined the Hermitian hull dimension $\ell$ of an $[n,k,n-k+1]_{q^2}$ code for $n=q^2$ and ${\rm Res}(\omega)=(1,\ldots,1)$, where $\ell$ can only take two possible values. Our approach generalizes the results. We remove the constraint on the code length $n$, allowing it to take values other than $q^2$, and our ${\rm Res}_{P_i}(\omega)=v_i^{q+1}$ is not limited to $v_i=1$.

\item If a primitive element $\theta \in \F_{q^2}$ is in the set of evaluation points, then $N$ is always $q^2-1$ since, in this case, we have an element of order $q^2-1$. If the set of evaluation points contains $\theta^e$ for some odd integer $e$, then $N=q^2-1$.
\item If all elements in the set of evaluation points have an even exponent, then $N$ is a \emph{proper divisor} of $q^2-1$.
\end{enumerate}
\end{remark}

We continue our investigation into sets of evaluation points for which $|L(N)|$ can be explicitly computed. We start with a simple one from a multiplicative subgroup of $\F_{q^2}^*$ before presenting a construction of a one-point generalized rational AG code whose Hermitian hull dimension is $\ell \geq |L(N)|$.

Let $U=U_{n-1}\cup \{0\}$, with $U_{n-1}=\{\alpha\in \F_{q^2}: \alpha^{n-1}=1 \}$. If $h(x)=\prod\limits_{\alpha\in U} (x-\alpha)$, then $h'(x) = n \, x^{n-1} - 1$. Hence, $h'(\alpha)=n-1$ for any $\alpha\in U_{n-1}$ and $h'(0)=-1$. Thus, for any $\alpha\in U$, there exists a $\beta \in \F_{q^2}$ such that $h'(\alpha)=\beta^{q+1}$.

\begin{cor}\label{cor:3} 
Let $q$ be a prime power and let $n \neq q^2$ be such that $(n-1)$ divides $(q^{2}-1)$. If 
\begin{align*}
n &= n_0 \, q + q_1 \mbox{, with }1 \leq n_0\le q-1, \, 0\le q_1 \le q-1 \mbox{ and}\\
k &= k_0 \, q + q_0 \mbox{, with } 
1 \le k_0 < \lfloor (q_1 + n_0 \, q -q_0)/q \rfloor, \, 0 \le q_0 < q \mbox{, and } q_1-q_0 \le 1,
\end{align*}
then there exists an $[n, k+1, n-k]_{q^2}$ MDS code whose Hermitian hull is of dimension $\ell \geq |L(N)|\ge |L(q^2-1)|$ as shown in Theorem \ref{thm:2new}.
\end{cor}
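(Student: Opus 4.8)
The plan is to verify that the hypotheses of Theorem \ref{thm:2new} are met for the specific choice of evaluation points $U = U_{n-1}\cup\{0\}$, where $U_{n-1}$ is the cyclic group of $(n-1)$-st roots of unity in $\F_{q^2}^*$; all the hard computational work has already been done in that theorem, so the corollary is essentially a matter of checking that the residue condition ${\rm Res}_{P_i}(\omega)=v_i^{q+1}$ holds and that the parameter constraints transfer verbatim. First I would record that, since $(n-1)\mid (q^2-1)$, the set $U_{n-1}$ is well-defined and has exactly $n-1$ elements, so $|U|=n$ and the place set $D=P_1+\cdots+P_n$ (one place of degree one for each $\alpha\in U$) has the right size. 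With $h(x)=\prod_{\alpha\in U}(x-\alpha) = x\prod_{\alpha\in U_{n-1}}(x-\alpha) = x(x^{n-1}-1) = x^n - x$, the displayed computation in the paragraph preceding the corollary gives $h'(x)=n x^{n-1}-1$, hence $h'(\alpha)=n-1$ for $\alpha\in U_{n-1}$ and $h'(0)=-1$.

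Next I would invoke the discussion fixing $\omega = dx/h(x)$, for which ${\rm Res}_{P_i}(\omega)=1/h'(a_i)$ at the point $P_i=(a_i,b_i)$. So the residues take only the two values $1/(n-1)$ and $-1$ in $\F_{q^2}^*$. The key point is that \emph{every} nonzero element of $\F_{q^2}$ is a $(q+1)$-st power: the map $x\mapsto x^{q+1}$ from $\F_{q^2}^*$ to $\F_q^*$ is surjective, and in fact, since $1/(n-1),-1\in\F_q^*\subseteq\F_{q^2}^*$, each admits a preimage $v_i\in\F_{q^2}^*$ with $v_i^{q+1}={\rm Res}_{P_i}(\omega)$. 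Assembling these into ${\bf v}=(v_1,\ldots,v_n)\in(\F_{q^2}^*)^n$ supplies exactly the vector required by Lemma \ref{lem:dim-hull} and hence by Theorem \ref{thm:2new}.

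Finally, I would note that the hypotheses on the integers in the corollary, namely $n=n_0 q+q_1$ with $1\le n_0\le q-1$ and $0\le q_1\le q-1$, and $k=k_0 q+q_0$ with $1\le k_0<\lfloor (q_1+n_0 q - q_0)/q\rfloor$, $0\le q_0<q$, and $q_1-q_0\le 1$, are precisely the hypotheses of Theorem \ref{thm:2new}. Therefore both conclusions of that theorem apply: ${\bf v}\,C_{\cal L}(D,G)$ with $G=kO$ is an $[n,k+1,n-k]_{q^2}$ MDS code, and its Hermitian hull dimension satisfies $\ell\ge |L(N)|\ge |L(q^2-1)|$ with $|L(q^2-1)|$ given by \eqref{eq:ell-general}; the constraint $n\neq q^2$ simply records that we are in the genuinely new regime not covered by \cite{PerPel}. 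I expect the only mild subtlety to be bookkeeping: confirming that the ambient function field is indeed rational, that $O$ (the place at infinity of $\F_{q^2}(x)$) lies outside ${\rm supp}(D)$ because $\infty\notin U$, and that $\deg(G)=k$ satisfies $-2<k<n$ so the MDS parameters follow from \cite[Corollary 2.2.3]{Stich}; this last inequality is immediate from $1\le k_0$ and $k_0<\lfloor(q_1+n_0 q-q_0)/q\rfloor\le n_0$, which forces $k<n$. No step presents a real obstacle; the corollary is a direct specialization.
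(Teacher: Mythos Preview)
Your proof is correct and follows the same approach as the paper: choose $U=U_{n-1}\cup\{0\}$, verify that the residues of $\omega=dx/h(x)$ lie in $\F_q^*$ and are therefore $(q+1)$-st powers in $\F_{q^2}^*$, then invoke Theorem~\ref{thm:2new}. One slip to fix: the clause ``\emph{every} nonzero element of $\F_{q^2}$ is a $(q+1)$-st power'' is false as written (only the elements of $\F_q^*$ are), although your very next sentence uses only the surjectivity of the norm map onto $\F_q^*$, which is the correct fact and all that is needed. The paper's proof additionally pins down $N=n-1$ explicitly; you omit this, and it is not strictly required to obtain $\ell\ge|L(N)|\ge|L(q^2-1)|$ from Theorem~\ref{thm:2new}, but it is what makes the sharper quantity $|L(N)|$ concretely computable in the examples of Table~\ref{table0}.
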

\begin{proof} 
We keep $\omega=\frac{dx}{h(x)}$, $D=(h(x))_0=P_1+\cdots+P_n$, $G=k \, O$,  $H = D-G+(\omega)$, and $v_i={\rm Res}_{P_i}(\omega)$ for any $1 \leq i \leq n$. If $N=n-1$, then, by Theorem \ref{thm:2new}, ${\bf v} \, C_{{\cal L}}(D, G)$, with ${\bf v}=(v_1,\ldots,v_n)$, has Hermitian hull dimension $ \ell \geq |L(N)|\ge |L(q^2-1)|$.
\end{proof}

For some dimension $k$ and length $n$ such that $(n-1)$ is a divisor of $(q^2-1)$, Hermitian self-orthogonal codes with parameters $[n,k,n-k+1]_{q^2}$ were constructed in \cite{SokQSC}. The Hermitian hull dimension of such a code is always greater than the $|L(N)|$ in \eqref{eq:ell-general}, making it suitable for application in quantum coding. Unfortunately, determining the exact value of the hull dimension is difficult. To see how far the lower bound is, we provide the exact values of the hull dimensions for $q \in \{7,9\}$ in Table \ref{table0}. For $q=7$, $n=25$, let $\theta$ be the standard primitive element of $\F_{q^2}$ in \texttt{MAGMA} \cite{Magma}. We use ${\bf v}=(\theta^{45}, \theta^{47},\hdots, \theta^{47})$ and the set of evaluation points whose elements have even exponents, namely
\begin{multline*}
\left\{0, 1, \theta^2, \theta^4, \theta^6, 3, \theta^{10}, \theta^{12}, \theta^{14}, 2, \theta^{18}, \theta^{20}, \theta^{22}, 6, \right. \\
\left.  \theta^{26}, \theta^{28},\theta^{30}, 4, \theta^{34}, \theta^{36}, \theta^{38}, 5, \theta^{42}, \theta^{44}, \theta^{46} \right\},
\end{multline*}
to get a $[25,11,15]_{49}$ code with Hermitian hull dimension $6$. The matrix $A_1$ in \eqref{eq:A1} forms a generator matrix 
$\begin{pmatrix}
I_{11} & A_1
\end{pmatrix}$.
\begin{figure*}[ht!]
\begin{equation}\label{eq:A1}
A_1=\begin{pmatrix}
\theta^{19}&2&\theta^{9}&\theta^{17}&\theta^{42}&\theta^{26}&3&\theta^{18}&3&\theta^{44}&\theta^{39}&\theta^{3}&\theta^{30}&\theta^{5}\\
\theta^{38}&\theta^{41}&\theta^{29}&\theta^{44}&\theta^{17}&\theta^{21}&\theta^{20}&\theta^{18}&\theta^{42}&\theta^{17}&\theta^{35}&\theta^{36}&\theta^{43}&\theta^{29}\\
\theta^{43}&\theta^{22}&\theta^{21}&6&3&\theta^{36}&\theta^{38}&\theta^{17}&\theta^{43}&\theta^{17}&\theta^{47}&\theta^{34}&\theta^{2}&\theta^{5}\\
\theta^{19}&\theta&6&\theta^{38}&\theta^{10}&\theta&\theta^{27}&\theta^{9}&2&5&\theta^{21}&\theta^{20}&\theta^{22}&\theta^{34}\\
1&\theta^{11}&\theta^{37}&\theta^{27}&\theta^{10}&\theta^{37}&\theta^{26}&4&\theta^{42}&\theta^{47}&\theta^{30}&\theta^{28}&\theta^{42}&5\\
\theta^{6}&\theta^{19}&\theta^{26}&\theta^{19}&\theta^{26}&2&\theta^{41}&\theta^{10}&\theta^{44}&\theta^{4}&2&2&\theta^{29}&\theta^{39}\\
\theta^{5}&\theta^{17}&\theta^{26}&1&\theta^{10}&6&\theta^{12}&\theta^{17}&\theta^{14}&\theta^{46}&\theta^{13}&\theta^{42}&\theta^{9}&\theta^{18}\\
4&3&2&5&\theta^{31}&1&\theta^{12}&\theta^{28}&\theta^{13}&3&\theta^{47}&\theta^{31}&\theta^{27}&\theta^{38}\\
\theta^{4}&\theta^{14}&\theta^{34}&\theta^{9}&\theta^{2}&1&\theta^{15}&\theta^{7}&\theta^{3}&\theta^{34}&\theta^{36}&\theta^{44}&\theta^{43}&\theta^{35}\\
\theta&\theta^{20}&\theta^{26}&\theta^{13}&\theta^{5}&\theta^{5}&\theta&\theta^{44}&2&\theta^{10}&1&\theta^{19}&\theta^{42}&\theta^{37}\\
\theta^{3}&\theta^{39}&\theta^{6}&\theta^{27}&\theta^{31}&\theta^{30}&\theta^{28}&\theta^{4}&\theta^{27}&\theta^{45}&\theta^{46}&\theta^{5}&\theta^{39}&\theta^{10}
\end{pmatrix}.
\end{equation}
\end{figure*}

\begin{sidewaystable}
\caption{Parameters of MDS codes, for $q \in \{7,9\}$ and $N=n-1$, whose Hermitian hull dimensions are computed based on Corollary \ref{cor:3}. For each code, we list the set $L(q^2-1)$, whose cardinality $|L(q^2-1)|$ is given by (\ref{eq:ell-general}), and the set $L(N)=\{i \Mod N: \x^i \in V_1^q\cap V_2\}$ whose cardinality $|L(N)|$ is precisely the Hermitian hull dimension $\ell$ of the code.}
\label{table0}
\renewcommand{\arraystretch}{1.2}
\centering
\begin{tabular}{llllcc}
\toprule
$(q,n_0,k_0,q_0,q_1)$&$L(q^2-1)$&$L(N)$ & $[n,k+1,n-k]_{q^2}$ & $|L(q^2-1)|$ & $|L(N)| = \ell$ \\
\midrule
$( 7, 3, 1, 3, 4 )$& $\{ 0, 1, 7, 8 \}$&$\{ 0, 1, 4, 7, 8, 11 \}$&$[ 25, 11, 15 ]_{49}$&$4$&$6$ \\
$( 7, 3, 1, 4, 4  )$& $\{ 0, 1, 7, 8 \}$&$\{ 0, 1, 4, 5, 7, 8, 11  \}$&$ [25, 12, 14 ]_{49}$&$4$&$7$ \\
$( 9, 4, 1, 4, 5 )$& $\{ 0, 1, 9, 10, 18, 19 \}$&$\{ 0, 1, 5, 9, 10, 14, 18, 19, 23 \}$&$[ 41, 14, 28 ]_{81}$&$6$&$9$ \\
$( 9, 4, 1, 5, 5  )$& $\{ 0, 1, 9, 10, 18, 19 \}$&$\{ 0, 1, 5, 6, 9, 10, 14, 18, 19, 23  \}$&$ [ 41, 15, 27 ]_{81}$&$6$&$10$ 
\\
$( 9, 4, 1, 6, 5  )$& $\{0, 1, 9, 10, 18, 19 \}$&$\{ 0, 1, 5, 6, 9, 10, 14, 15, 18, 19, 23 \}$&$[ 41, 16, 26 ]_{81}$&$6$&$11$ \\
$( 9, 4, 1, 7, 5 )$& $\{ 0, 1, 9, 10, 18, 19 \}$&$\{ 0, 1, 5, 6, 9, 10, 14, 15, 18, 19, 23  \}$&$ [ 41, 17, 25 ]_{81}$&$6$&$11$ \\
$( 9, 4, 1, 8, 5 )$& $\{ 0, 1, 9, 10, 18, 19 \}$&$\{ 0, 1, 5, 6, 9, 10, 14, 15, 18, 19 \}$&$[ 41, 18, 24 ]_{81}$&$6$&$10$ \\
\bottomrule
\end{tabular}
\end{sidewaystable}

Next, we prove the existence of a family of $\F_{q^2}$-linear MDS codes whose Hermitian hulls have dimensions that can be nicely lower-bounded. Lemma \ref{lem:dim-hull} requires the existence of a vector ${\bf v}=(v_1,\ldots,v_n)\in (\F_{q^2}^*)^n$ that satisfies ${\rm Res}_{P_i}(\omega)=v_i^{q+1}$ for any $1\le i\le n$. We construct such a vector from a carefully built set of evaluation points.

Let $\theta$ be a primitive element of $\F_{q^2}$. We label the elements of $\F_q$ by $u_1,\ldots, u_q$. To guarantee $N=q^2-1$, we choose an $\alpha\in \F_{q^2}\setminus \F_q$ such that $\alpha=\theta^e$ for some odd $e$. We write $\alpha_{i,j} = u_i \, \alpha + u_j$ for $1\le i\le n_0$ and $1\le j\le q$. Such an $\alpha$ exists for any $q$. Let $U=\{\alpha_{i,j} : 1\le i\le n_0, 1\le j\le q\}$ and $h(x)=-(\alpha^q-\alpha)^{(1-t)}\prod\limits_{\beta\in U}(x-\beta)$. We know from \cite[Construction 4]{SokQSC} that $h'(\beta) \in \F_q$ for any $\beta\in U$. 

\begin{cor}\label{cor:1} 
Let $q$ be a prime power and let $n_0$ be an integer such that $1\le n_0\le q-1$. If $k= k_0 \, q + q_0$ with $1\le k_0< \lfloor (n_0 \, q-q_0)/q \rfloor$ and $0\le q_0\le q-1$, then there exists an $[n_0 \, q, k+1]_{q^2}$ MDS code whose Hermitian hull dimension is $\ell \ge |L(q^2-1)|$, with $|L(q^2-1)|$ as in \eqref{eq:ell-general}.
\end{cor}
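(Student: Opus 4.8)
The plan is to instantiate the hypotheses of Lemma~\ref{lem:dim-hull} and Theorem~\ref{thm:2new} using the explicit set of evaluation points $U=\{\alpha_{i,j} : 1\le i\le n_0,\, 1\le j\le q\}$ described just before the statement, so that the corollary becomes a direct corollary of Theorem~\ref{thm:2new}. First I would set $n=|U|=n_0\,q$, so here $q_1=0$; note that the constraint $q_1-q_0\le 1$ of Theorem~\ref{thm:2new} is then automatic (it reads $-q_0\le 1$), and the constraint $1\le k_0<\lfloor(q_1+n_0\,q-q_0)/q\rfloor=\lfloor(n_0\,q-q_0)/q\rfloor$ is exactly the hypothesis imposed on $k_0$. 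So the combinatorial side of Theorem~\ref{thm:2new} applies verbatim once the geometric setup is in place.

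Next I would assemble the geometric data. Take $D=(h(x))_0=P_1+\cdots+P_n$ where $P_i$ ranges over $\mathcal{P}_U$, take $G=k\,O$ with $O$ the place at infinity, let $\omega=\frac{dx}{h(x)}$ with $h(x)=-(\alpha^q-\alpha)^{(1-t)}\prod_{\beta\in U}(x-\beta)$, and put $H=D-G+(\omega)=(n-k-2)\,O$ as computed in Section~\ref{sec:main}. Since $h(x)$ differs from $\prod_{\beta\in U}(x-\beta)$ only by a nonzero constant factor, the residue computation carried out for $\omega=\frac{dx}{h(x)}$ still gives $\mathrm{Res}_{P_i}(\omega)=\frac{1}{h'(\beta_i)}$ (up to that constant), and the conditions of Lemma~\ref{lem:dual2} on $\omega$ (simple poles at each $P_i$, nonzero residues) hold. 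By \cite[Construction 4]{SokQSC}, $h'(\beta)\in\F_q$ for every $\beta\in U$; hence each $\mathrm{Res}_{P_i}(\omega)$ lies in $\F_q^*$. Because every element of $\F_q^*$ is a $(q+1)^{\rm st}$ power in $\F_{q^2}^*$ (the norm map $\F_{q^2}^*\to\F_q^*$, $z\mapsto z^{q+1}$, is surjective), there is a vector ${\bf v}=(v_1,\ldots,v_n)\in(\F_{q^2}^*)^n$ with $\mathrm{Res}_{P_i}(\omega)=v_i^{q+1}$ for all $i$. This is precisely the hypothesis Lemma~\ref{lem:dim-hull} needs.

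Then I would verify $N=q^2-1$. The set $U$ contains $\alpha_{1,1}=u_1\alpha+u_1$; more to the point, choosing the labelling so that $u_1=1$ and $u_2=0$, among the $\alpha_{i,j}$ we find $\alpha=\theta^e$ with $e$ odd, and by item (3) of the Remark the presence of such an element forces $N=q^2-1$. (If the chosen labelling does not literally produce $\alpha$ itself, one still gets an element of the form $u_i\alpha$ or $u_i\alpha+u_j$ of odd exponent; the point is that $U\not\subseteq\F_q$ contains an odd-exponent element by construction, which is exactly how $\alpha$ was selected.) With $N=q^2-1$ and the parameter ranges matching, part~(1) of Theorem~\ref{thm:2new} yields that ${\bf v}\,C_{\cal L}(D,G)$ is an $[n_0\,q,\,k+1,\,n_0\,q-k]_{q^2}$ MDS code with Hermitian hull dimension $\ell\ge|L(q^2-1)|$, with $|L(q^2-1)|$ as in \eqref{eq:ell-general} specialized to $q_1=0$. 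That completes the proof.

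\textbf{Main obstacle.} The genuinely non-routine point is checking that $N=q^2-1$ for the chosen $U$: one must be sure that the construction of $\alpha$ (an element $\theta^e$ with $e$ odd, lying outside $\F_q$) really does sit inside the span $\{u_i\alpha+u_j\}$ in a way that exhibits an order-$(q^2-1)$ element among the evaluation points — i.e.\ that taking $u_i=1,u_j=0$ is legitimate — and that such an $\alpha$ exists for every prime power $q$. The existence of an odd-exponent primitive-related element is immediate when $q$ is even (then $q^2-1$ is odd and every generator works) and for $q$ odd follows because $\theta$ itself has odd... in fact because not every element of $\F_{q^2}\setminus\F_q$ can have even exponent (the even-exponent elements form the index-$2$ subgroup of squares, which does not contain all of $\F_{q^2}^*\setminus\F_q^*$). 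Everything else is bookkeeping already carried out in Theorem~\ref{thm:2new} and in \cite[Construction 4]{SokQSC}.
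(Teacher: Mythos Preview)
Your proposal is correct and follows essentially the same approach as the paper's own proof: set up $U$, $h(x)$, $\omega$, $D$, $G$, $H$ exactly as in the discussion preceding the corollary, use \cite[Construction~4]{SokQSC} to get $h'(\beta)\in\F_q$ so that the residues are $(q+1)^{\rm st}$ powers, pick ${\bf v}$ accordingly, and invoke Theorem~\ref{thm:2new} with $N=q^2-1$. You supply more detail than the paper does (explicitly citing surjectivity of the norm to justify the existence of ${\bf v}$, and arguing why $\alpha\in U$ forces $N=q^2-1$ via the Remark), but the skeleton is identical.
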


\begin{proof} 
Let $U$ and $h(x)$ be as in the above discussion. Let 
\[
\omega:=\frac{dx}{h(x)}, \quad 
D:=(h(x))_0 =P_1+ \cdots+P_n, \quad G :=k \, O \mbox{, and }~ H:=D-G+(\omega).
\]
We pick ${\bf v}=(v_1,\hdots,v_{n_0q})\in (\F_{q^2}^*)^n$ so that ${\rm Res}_{P_i}(\omega)=v_i^{q+1}$ for any $1\le i\le n$. If $N=q^2-1$, then, by Theorem \ref{thm:2new}, ${\bf v} \, C_{{\cal L}}(D, G)$ has Hermitian hull dimension 
$\ell\ge |L(q^2-1)|$ as claimed.
\end{proof}

We propose another family of MDS codes with an explicit lower bound on their Hermitian hull dimensions. The codes are defined based on sets of evaluation points to have the property specified in Theorem \ref{thm:2new}. Such a set can be built by using \cite[Construction 3]{SokQSC} and \cite[Lemma 3.7]{FangFuLiZhu}.
Let $U_{s}$ be a multiplicative subgroup of $\F_{q^2}^*$ of order $s$, which means that $s$ divides $(q^{2}-1)$. Let $r=\frac{s}{\gcd(s,q+1)}$. Let $\alpha_{1} \, U_{s}, \ldots ,\alpha_{\frac{q-1}{r}-1} \, U_{s} $ be distinct cosets of $\F_{q^2}^*$ which are different from $U_{s}$. For $1 \leq t \leq \frac{q-1}{n_{2}}-1$, we define  
\begin{equation}\label{eq:U}
U : =U_{s} \bigcup \Big(\bigcup^{t}\limits_{j=1} \alpha_{j}U_{s}\Big) \bigcup \{0\}\mbox{, say } U=\{a_1,\ldots,a_{(t+1){s}+1}\},
\end{equation}
and write 
\begin{equation}\label{eq:h}
h(x)=\prod\limits_{\alpha\in U}(x-\alpha).
\end{equation}
By \cite[Construction 3]{SokQSC}, we have 
\[
h'(a_i)=\beta_i^{q+1} \mbox{ for any } 
1 \le i\le (t+1){s}+1 \mbox{ and some } \beta_i \in \F_{q^2}.
\]
Since there exists an $i$ such that $1\le i\le \frac{q-1}{n_{2}}-1$ and $\alpha_{i}=\theta^{e_i}$, with $e_i$ being odd, we have to include $\alpha_i \, U_s$ as the first coset in $U$ to guarantee that $N=q^2-1$.
This leads to the next corollary.

\begin{cor}\label{cor:2} 
Let $q$ be a prime power. Let integers $t$, $s$, and $r$ be such that 
\[
s \mbox{ divides } (q^{2}-1), \quad 
r=\frac{s}{\gcd (s,q+1)} \mbox{,  and } ~
1\le t \le \frac{q-1}{r}-1.
\]
Let $n =(t+1) \, s+1$. If the code length $n$ can be written as $n=n_0 \, q + q_1$ for some $1 \le n_0$, $q_1\le q-1$, and $k=k_0 \, q + q_0$, with $1 \le k_0 < \lfloor (q_1+n_0 \, q - q_0)/q \rfloor$, $0\le q_0\le q-1$, and $q_1-q_0 \le 1$, then there exists an $[n, k+1, n-k]_{q^2}$ MDS code whose Hermitian hull is of dimension $\ell\ge |L(q^2-1)|$, with $|L(q^2-1)|$ as in \eqref{eq:ell-general}.
\end{cor}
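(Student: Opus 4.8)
The plan is to verify that the construction preceding the statement furnishes exactly the data required to invoke Theorem \ref{thm:2new}, and then to quote that theorem. Concretely, I would set $\omega := \frac{dx}{h(x)}$ with $h(x)$ as in \eqref{eq:h}, $D := (h(x))_0 = P_1 + \cdots + P_n$ where $n = (t+1)\,s + 1$ and the $P_i$ are the degree-one places corresponding to the elements of $U$ in \eqref{eq:U}, $G := k\,O$ for the place $O$ at infinity, and $H := D - G + (\omega)$. The earlier discussion already records, via the computation $\omega = \frac{d\,x_P}{h'(x_P+a)}\sum_{\alpha\in U}\frac{1}{x_P + a - \alpha}\,d\,x_P$, that $\omega$ has a simple pole at each $P_i$ with $v_{P_i}(\omega) = -1$ and ${\rm Res}_{P_i}(\omega) = \frac{1}{h'(a_i)}$; since $U$ is disjoint from the support of $G = k\,O$, the hypothesis ${\rm supp}(D)\cap{\rm supp}(G) = \emptyset$ holds, and $H = (n - k - 2)\,O$ as computed in Section \ref{sec:main}.

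The two remaining points to check are (i) the residue condition of Lemma \ref{lem:dim-hull}, namely the existence of ${\bf v} = (v_1,\ldots,v_n) \in (\F_{q^2}^*)^n$ with ${\rm Res}_{P_i}(\omega) = v_i^{q+1}$ for every $i$, and (ii) the value $N = q^2 - 1$. For (i): by \cite[Construction 3]{SokQSC}, the choice of $U$ in \eqref{eq:U} guarantees $h'(a_i) = \beta_i^{q+1}$ for suitable $\beta_i \in \F_{q^2}^*$; since $x \mapsto x^{q+1}$ is a multiplicative map onto the subgroup of $(q+1)^{\rm st}$ powers and is surjective onto it, $\frac{1}{h'(a_i)} = (\beta_i^{-1})^{q+1}$, so $v_i := \beta_i^{-1}$ works. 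For (ii): the set $U$ contains the coset $\alpha_i\,U_s$ with $\alpha_i = \theta^{e_i}$ for odd $e_i$, so $U$ contains an element of the form $\theta^{(\text{odd})}$; by Remark (item 3 of the Remark following Theorem \ref{thm:2new}), this forces $N = q^2 - 1$. Finally, the arithmetic hypotheses $n = n_0 q + q_1$ with $1 \le n_0$, $q_1 \le q - 1$, and $k = k_0 q + q_0$ with $1 \le k_0 < \lfloor (q_1 + n_0 q - q_0)/q\rfloor$, $0 \le q_0 \le q - 1$, $q_1 - q_0 \le 1$ are assumed in the statement, matching those of Theorem \ref{thm:2new} verbatim.

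With all hypotheses of Lemma \ref{lem:dim-hull} and Theorem \ref{thm:2new} in place, part (1) of Theorem \ref{thm:2new} applied to ${\bf v}\,C_{\cal L}(D,G)$ yields an $[n, k+1, n-k]_{q^2}$ MDS code whose Hermitian hull has dimension $\ell \ge |L(q^2-1)|$ with $|L(q^2-1)|$ as in \eqref{eq:ell-general}, which is the assertion. I expect no genuine obstacle here: the corollary is essentially a packaging result, and the only mild subtlety is confirming that the constraint $1 \le t \le \frac{q-1}{r} - 1$ (together with $s \mid q^2 - 1$ and $r = s/\gcd(s,q+1)$) genuinely allows the distinct cosets $\alpha_1 U_s, \ldots, \alpha_t U_s$ to be chosen with at least one $\alpha_i$ an odd power of $\theta$ — this is exactly the content of \cite[Lemma 3.7]{FangFuLiZhu} combined with the coset count, and it is the step I would state most carefully. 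The MDS property and the exact dimension/distance follow from the standard rational-AG-code parameters recalled after \eqref{eq:ODG}, since $-2 < \deg(G) = k < n$.
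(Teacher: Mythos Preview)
Your proposal is correct and follows essentially the same approach as the paper's proof: set up $\omega$, $D$, $G$, $H$ from the construction in \eqref{eq:U}--\eqref{eq:h}, verify the residue condition via \cite[Construction 3]{SokQSC} and the value $N=q^2-1$ via the odd-exponent coset representative, then invoke Theorem \ref{thm:2new}. The paper's own proof is terser but identical in structure, and your more explicit justification of points (i) and (ii) is a welcome elaboration rather than a departure.
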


\begin{proof} Let $U$ and $h(x)$ be as in \eqref{eq:U} and \eqref{eq:h} respectively. Let 
\[
\omega :=\frac{dx}{h(x)}, \quad 
D :=(h(x))_0=P_1+\cdots+P_n, \quad  
G :=k \, O \mbox{, and }~ H: =D-G+(\omega).
\]
We select vector ${\bf v}=(v_1,\ldots,v_{n})\in (\F_{q^2}^*)^n$ that satisfies ${\rm Res}_{P_i}(\omega)=v_i^{q+1}$ for any $1\le i\le n$. If $N=q^2-1$, then, by Theorem \ref{thm:2new}, the Hermitian hull of ${\bf v} \, C_{{\cal L}}(D, G)$ has dimension $\ell$, which is lower bounded by $|L(q^2-1)|$, with $|L(q^2-1)|$ as given in \eqref{eq:ell-general}.
\end{proof}

An earlier example, based on Corollary \ref{cor:3}, when $q=7$, $n=25$, and $N=24$, exhibits a $[25,11,15]_{49}$ code with Hermitian hull dimension $6$. Keeping $n=25$ and $\theta$, we now select $N=48$, 
\begin{multline*}
{\bf v} = 
\left\{1, \theta^{46}, \theta^{45}, \theta^{43}, \theta^{46}, \theta^{45}, \theta^{43}, \theta^{46}, \theta^{45}, \theta^{43}, \theta^{46}, \theta^{45}, \right. \\
\left.  \theta^{43}, \theta^{46}, \theta^{45}, \theta^{43}, \theta^{46}, \theta^{45}, \theta^{43}, \theta^{46}, \theta^{45}, \theta^{43}, \theta^{46}, \theta^{45}, \theta^{43}\right\},
\end{multline*}
and set of evaluation points 
\begin{multline*}
\left \{ 1, \theta^{43}, \theta, \theta^{44}, \theta^2, 0, \theta^6, \theta^7, 3, \theta^{12}, \theta^{13}, \theta^{14}, \theta^{18},  \right. \\
\left. \theta^{19}, \theta^{20}, 6,\theta^{25}, \theta^{26}, \theta^{30}, \theta^{31}, 4, \theta^{36}, \theta^{37}, \theta^{38}, \theta^{42} \right\}
\end{multline*}
to get, by Corollary \ref{cor:2}, a $[25,11,15]_{49}$ code with Hermitian hull dimension $4$. The matrix $A_2$ in \eqref{eq:A2} forms a generator matrix $\begin{pmatrix}
I_{11} & A_2 \end{pmatrix}$.

\begin{figure*}[ht!]
\begin{equation}\label{eq:A2}
A_2=\begin{pmatrix}
\theta^{39}&\theta^{4}&\theta^{22}&\theta^{29}&\theta^{20}&4&6&6&\theta^{21}&\theta^{19}&\theta^{44}&\theta^{14}&\theta&\theta^{22}\\
\theta^{18}&\theta^{4}&\theta^{23}&\theta^{42}&\theta^{28}&3&\theta^{13}&\theta^{17}&\theta^{36}&\theta^{10}&\theta^{46}&\theta^{28}&\theta^{41}&2\\
\theta^{20}&\theta^{42}&\theta^{45}&\theta^{41}&\theta^{44}&\theta^{27}&\theta^{11}&6&\theta^{29}&\theta^{45}&\theta^{46}&\theta^{41}&\theta^{26}&\theta^{25}\\
4&\theta^{33}&\theta^{39}&\theta^{4}&4&\theta^{47}&\theta^{19}&\theta^{11}&\theta^{13}&\theta^{27}&\theta^{22}&\theta^{2}&\theta^{28}&\theta^{47}\\
\theta^{2}&1&\theta^{17}&\theta^{33}&\theta^{45}&\theta^{10}&\theta^{14}&\theta^{9}&\theta^{22}&\theta^{33}&\theta^{14}&\theta^{6}&\theta^{17}&\theta\\
3&\theta^{4}&\theta^{18}&\theta^{13}&\theta^{13}&\theta^{10}&\theta^{39}&\theta^{3}&\theta^{19}&\theta^{9}&\theta^{47}&\theta^{25}&\theta^{30}&\theta^{27}\\
\theta^{21}&\theta^{11}&\theta^{11}&\theta^{15}&\theta^{42}&\theta^{42}&5&\theta^{29}&\theta^{29}&\theta^{7}&6&\theta^{47}&\theta^{2}&\theta^{41}\\
\theta^{12}&\theta^{35}&\theta^{47}&\theta^{37}&\theta^{13}&6&\theta^{25}&\theta^{46}&\theta^{44}&\theta^{6}&\theta^{26}&\theta^{12}&\theta^{12}&\theta^{37}\\
\theta^{31}&\theta^{47}&\theta^{44}&\theta^{28}&\theta^{2}&\theta^{10}&\theta^{38}&\theta^{47}&\theta^{29}&2&\theta^{5}&\theta^{42}&\theta^{21}&\theta^{7}\\
\theta^{4}&5&\theta^{2}&\theta^{47}&\theta^{15}&\theta^{9}&\theta^{46}&\theta^{34}&\theta^{19}&\theta^{23}&\theta^{37}&\theta^{10}&\theta^{25}&\theta^{38}\\
5&\theta&4&\theta^{42}&\theta^{43}&\theta&6&\theta^{9}&\theta^{5}&\theta^{12}&\theta^{10}&\theta^{29}&\theta^{28}&\theta^{44}
\end{pmatrix}.
\end{equation}
\end{figure*}

\section{Application to EAQECCs}\label{sec:application}
	
A qudit \emph{quantum error-correcting code} (QECC) $\mathcal{Q}$ with parameters $[[n,\kappa,\delta]]_q$ is a $q^{\kappa}$-dimensional subspace of the Hilbert space $(\mathbb{C}^q)^{\otimes n}$, over the complex field $\mathbb{C}$, with (quantum) minimum distance $\delta$. Such a quantum code encodes $\kappa$ logical qudits into $n$ logical qudits and is capable of correcting quantum error operators affecting up to $\lfloor(\delta-1)/2\rfloor$ arbitrary positions in the quantum ensemble. A qudit \emph{entanglement-assisted quantum code} (EAQECC) requires the communicating parties to share $c$ pairs of error-free maximally entangled states ahead of time. An $[[n,\kappa,\delta; c]]_q$ EAQECC encodes $\kappa$ logical qudits into $n$ physical qudits, with the help of $n-\kappa-c$ ancillas and $c$ pairs of maximally entangled qudits. The code can correct up to $\lfloor(\delta-1)/2\rfloor$ quantum errors. An EAQECC with $c=0$ is a QECC.

\begin{lemma}{\rm (\cite[Corollary 2]{WilBru} and \rm\cite[Proposition 3.3]{GJG18}}\label{lem:quantum1}) 
If $C$ is an $[n,k,\delta]_{q^2}$ code, then there exists an $[[n,\kappa, \delta; c]]_q$ EAQECC $\mathcal{Q}$ with
\begin{equation}\label{eq:paramsEA}
c = (n-k) - \dim
\left({\rm Hull}_{\rm H}(C)\right)\mbox{ and}\
\kappa = 2k-n+c .\\
\end{equation}
\end{lemma}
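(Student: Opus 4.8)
The plan is to realize $\mathcal{Q}$ through the trace–symplectic picture and then invoke the general entanglement-assisted stabilizer construction of \cite{WilBru}. First I would fix an $\F_q$-basis of $\F_{q^2}$ together with its dual basis under $\mathrm{Tr}_{\F_{q^2}/\F_q}$ and use it to build the standard $\F_q$-linear embedding $\phi:\F_{q^2}^n\to\F_q^{2n}$ with the property that, for the trace-symplectic form $\langle\cdot,\cdot\rangle_s$ on $\F_q^{2n}$, one has $\langle\phi(\u),\phi(\v)\rangle_s=0$ if and only if $\langle \u,\v\rangle_{\rm H}=0$, and, more precisely, $\phi$ transports Hermitian duality to symplectic duality: $\phi(C^{\perp_{\rm H}})=\phi(C)^{\perp_s}$ for every $\F_{q^2}$-linear $C$. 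This is the classical Hermitian-to-symplectic correspondence that underlies the Hermitian construction of stabilizer codes; see, e.g., \cite{Ket Kla}. Since $\phi$ is an injective $\F_q$-linear map that respects the block structure, $\dim_{\F_q}\phi(C)=2k$, $\dim_{\F_q}\phi(C^{\perp_{\rm H}})=2(n-k)$, and the minimum symplectic weight of $\phi(C)$ equals the Hamming minimum distance $\delta$ of $C$.

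Next I would apply the construction to $\mathcal{D}:=\phi(C^{\perp_{\rm H}})\subseteq\F_q^{2n}$. Its symplectic hull is
\[
\mathcal{D}\cap\mathcal{D}^{\perp_s}=\phi\big(C^{\perp_{\rm H}}\cap (C^{\perp_{\rm H}})^{\perp_{\rm H}}\big)=\phi\big({\rm Hull}_{\rm H}(C^{\perp_{\rm H}})\big),
\]
and, the Hermitian dual being an involution, ${\rm Hull}_{\rm H}(C^{\perp_{\rm H}})={\rm Hull}_{\rm H}(C)$, so $\dim_{\F_q}(\mathcal{D}\cap\mathcal{D}^{\perp_s})=2\dim {\rm Hull}_{\rm H}(C)$. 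By \cite[Corollary 2]{WilBru} (equivalently \cite[Proposition 3.3]{GJG18}), $\mathcal{D}$ yields a $q$-ary EAQECC of length $n$ whose number of pre-shared maximally entangled pairs is
\[
c=\tfrac12\Big(\dim_{\F_q}\mathcal{D}-\dim_{\F_q}(\mathcal{D}\cap\mathcal{D}^{\perp_s})\Big)=(n-k)-\dim {\rm Hull}_{\rm H}(C),
\]
whose number of encoded qudits is $\kappa=n-\dim_{\F_q}\mathcal{D}+c=n-2(n-k)+c=2k-n+c$, and whose minimum distance is at least the minimum weight of $\mathcal{D}^{\perp_s}=\phi\big((C^{\perp_{\rm H}})^{\perp_{\rm H}}\big)=\phi(C)$, which equals $\delta$. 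These are exactly the quantities declared in \eqref{eq:paramsEA}.

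The only genuinely delicate point is the bookkeeping in the middle: one must verify that the $\F_q$-dimension counts, the symplectic-duality transport $\phi(C^{\perp_{\rm H}})=\phi(C)^{\perp_s}$, and the indexing conventions of the Wilde–Brun construction (how many ancilla stabilizers versus ebit pairs, and which code governs the distance) all line up so that the triple $(\kappa,\delta;c)$ comes out as stated; once $\phi$ is set up correctly, each of these is routine. An alternative that sidesteps the explicit $\phi$ is to quote \cite[Proposition 3.3]{GJG18} directly, which already records the Hermitian version of the construction together with the formula for $c$ in terms of $\dim {\rm Hull}_{\rm H}(C)$; in that case the proof reduces to substituting $\dim C^{\perp_{\rm H}}=n-k$ and ${\rm Hull}_{\rm H}(C^{\perp_{\rm H}})={\rm Hull}_{\rm H}(C)$ into their statement.
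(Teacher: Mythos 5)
The paper does not prove this lemma at all: it is quoted verbatim from \cite[Corollary 2]{WilBru} and \cite[Proposition 3.3]{GJG18}, so there is no internal proof to compare against. Your reconstruction of the underlying argument is essentially correct and is the standard route: pass to $\F_q^{2n}$ via the trace map, use the fact that for an $\F_{q^2}$-linear code the Hermitian dual is transported to the trace-symplectic dual, apply the entanglement-assisted stabilizer construction to $\phi(C^{\perp_{\rm H}})$, and count $c=\frac12\bigl(\dim_{\F_q}\mathcal{D}-\dim_{\F_q}(\mathcal{D}\cap\mathcal{D}^{\perp_s})\bigr)=(n-k)-\dim{\rm Hull}_{\rm H}(C)$, $\kappa=n+c-\dim_{\F_q}\mathcal{D}=2k-n+c$, with distance governed by the symplectic weight of $\mathcal{D}^{\perp_s}=\phi(C)$, which is $\delta$ (and this even yields purity to $\delta$, which the paper later needs in Lemma \ref{lem:qu-propagation}). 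One small inaccuracy: the vector-wise claim that $\langle\phi(\mathbf{u}),\phi(\mathbf{v})\rangle_s=0$ if and only if $\langle\mathbf{u},\mathbf{v}\rangle_{\rm H}=0$ is too strong — the trace-alternating form vanishes whenever the Hermitian product does, but not conversely for individual vectors; the correct and sufficient statement is the code-level identity $\phi(C^{\perp_{\rm H}})=\phi(C)^{\perp_s}$ for $\F_{q^2}$-linear $C$ (Ketkar {\it et al.}), which is what your dimension and hull computations actually rely on, together with $(C^{\perp_{\rm H}})^{\perp_{\rm H}}=C$, so the argument goes through. Your fallback — citing \cite[Proposition 3.3]{GJG18} directly and substituting $\dim C^{\perp_{\rm H}}=n-k$ and ${\rm Hull}_{\rm H}(C^{\perp_{\rm H}})={\rm Hull}_{\rm H}(C)$ — is exactly what the paper does implicitly.
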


The Singleton-like bound for any $[[n,\kappa, \delta; c]]_q$ code $\mathcal{Q}$ in \cite[Corollary 9]{GHW} states that 
\begin{alignat}{5}
\kappa &\le c+\max\{0,n - 2 \delta + 2\},\label{eq:QMDS_small_distance}\\
\kappa &\le n-\delta+1,\label{eq:QMDS_trivial}\\
\kappa
&\le\frac{(n-\delta+1)(c+2\delta-2-n)}{3\delta-3-n} \mbox{, with }
\delta-1\ge\frac{n}{2}.\label{eq:QMDS_large_distance}
\end{alignat}
When equality holds in one of the bounds \eqref{eq:QMDS_small_distance}-\eqref{eq:QMDS_large_distance}, $\mathcal{Q}$ is {\it maximum distance separable} (MDS). By Lemma \ref{lem:quantum1}, an $[n,n-k,k+1]_{q^2}$ classical MDS with $k<\lfloor n/2\rfloor$ and Hermitian hull dimension $\ell$ gives rise to an $[[n,n-k-\ell,k+1;k-\ell]]_q$ MDS EAQECC. Thus, the higher the hull dimension $\ell$, the smaller the number $c$ of pre-shared entangled states becomes. In particular, if $\ell=k$, then we get an $[[n,n-2k,k+1]]_q$ MDS QECC.

We use a propagation rule from \cite{Luo2022} to derive the parameters of new codes from known ones.

\begin{lemma}\label{lem:qu-propagation}{\rm \cite[Theorem 12]{Luo2022}}
Let $q>2$ be a prime power. If there exists a pure $[[n,\kappa,\delta;c]]_q$ code $\mathcal{Q}$ constructed by Lemma \ref{lem:quantum1}, then there
exists an $[[n, \kappa+i, \delta; c+i]]_q$ code $\mathcal{Q}^{\prime}$ that is pure to $\delta$ for each $i
\in\{1,\ldots,\ell\}$, with $\ell$ being the Hermitian hull dimension of the ${q^2}$-ary code $C$ that corresponds to $\mathcal{Q}$.
\end{lemma}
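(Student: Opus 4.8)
The idea is to descend from the classical code behind $\mathcal{Q}$: for each $i$ I would exhibit a code monomially equivalent to it whose Hermitian hull is smaller by exactly $i$, and then feed that code into Lemma \ref{lem:quantum1}. Write $\mathcal{Q}$ as the EAQECC built from an $[n,k,\delta]_{q^2}$ code $C$ with $\dim\left({\rm Hull}_{\rm H}(C)\right)=\ell$, so that \eqref{eq:paramsEA} gives $c=(n-k)-\ell$ and $\kappa=2k-n+c=k-\ell$. Fix a generator matrix $G$ of $C$, let $\overline{G}$ be its entrywise $q$-th power, and set $M:=G\,\overline{G}^{\,\top}$. Then $M$ is Hermitian, a codeword $\m G$ lies in $C^{\perp_{\rm H}}$ iff $\m M=\0$, and hence $\dim\left({\rm Hull}_{\rm H}(C)\right)=k-{\rm rank}(M)$.

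The heart of the argument is the claim that, when $q>2$ and $\dim\left({\rm Hull}_{\rm H}(C)\right)\ge 1$, there is a vector $\mathbf{d}=(d_1,\ldots,d_n)\in(\F_{q^2}^*)^n$ with $\dim\left({\rm Hull}_{\rm H}(\mathbf{d}\,C)\right)=\dim\left({\rm Hull}_{\rm H}(C)\right)-1$. Writing $D={\rm diag}(d_1,\ldots,d_n)$, the code $\mathbf{d}\,C$ has generator matrix $GD$ and associated Hermitian matrix $GD\,\overline{D}\,\overline{G}^{\,\top}=G\,{\rm diag}(d_1^{q+1},\ldots,d_n^{q+1})\,\overline{G}^{\,\top}$; since $x\mapsto x^{q+1}$ maps $\F_{q^2}^*$ onto $\F_q^*$, choosing $\mathbf{d}$ amounts to choosing scalars $\mu_j\in\F_q^*$. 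If we scale only the $j$-th coordinate by some $\mu_j\ne 1$ — possible because $q>2$ forces $|\F_q^*|\ge 2$ — then $M$ is replaced by the rank-one Hermitian update $M+(\mu_j-1)\g_j\overline{\g_j}^{\,\top}$, with $\g_j$ the $j$-th column of $G$. Taking the Schur complement of the bottom-right entry in $\begin{pmatrix}M & \g_j\\ \overline{\g_j}^{\,\top} & -(\mu_j-1)^{-1}\end{pmatrix}$ recovers this updated matrix, and a rank count of this bordered matrix shows that ${\rm rank}$ goes up by exactly $1$ as soon as $\g_j\notin{\rm col}(M)$. Such a $j$ exists: if every column of $G$ lay in ${\rm col}(M)$ we would get ${\rm col}(M)={\rm col}(G)=\F_{q^2}^k$, forcing ${\rm rank}(M)=k$ and $\dim\left({\rm Hull}_{\rm H}(C)\right)=0$, a contradiction. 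This proves the claim.

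Iterating the claim $i$ times — legitimate for every $i\in\{1,\ldots,\ell\}$ because the hull dimension stays $\ge 1$ until the last step — produces a code $C_i:=\mathbf{d}^{(i)}\,C$ for a suitable $\mathbf{d}^{(i)}\in(\F_{q^2}^*)^n$ with $\dim\left({\rm Hull}_{\rm H}(C_i)\right)=\ell-i$. The code $C_i$ is monomially equivalent to $C$, so it is again an $[n,k,\delta]_{q^2}$ code and $C_i^{\perp_{\rm H}}$ is monomially equivalent to $C^{\perp_{\rm H}}$. Applying Lemma \ref{lem:quantum1} to $C_i$ then yields an EAQECC with entanglement parameter $(n-k)-(\ell-i)=c+i$ and dimension $2k-n+(c+i)=\kappa+i$, and because all the Hamming-metric data of $C$ and $C^{\perp_{\rm H}}$ — in particular the quantities governing the minimum distance and purity — are unchanged under monomial equivalence, the resulting code has minimum distance $\delta$ and is pure to $\delta$. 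This gives the desired $[[n,\kappa+i,\delta;c+i]]_q$ code $\mathcal{Q}'$ for every $i\in\{1,\ldots,\ell\}$.

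I expect the one genuinely delicate point to be the rank-one update step: one must confirm that scaling a single, carefully chosen coordinate moves the rank of the Hermitian form $M$ up by exactly one — not down, and not unchanged — which is precisely where the hypothesis $q>2$ and the choice of a column outside ${\rm col}(M)$ enter. Everything else is bookkeeping with \eqref{eq:paramsEA} and the monomial invariance of weight distributions.
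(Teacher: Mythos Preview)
Your argument is correct and takes a genuinely different route from the one the paper sketches (following \cite{Sok1D1,Luo2022}). The paper first brings the generator matrix into a systematic form $G_2=\begin{pmatrix}I_{\ell'}&O&A\\O&I_{k-\ell'}&B\end{pmatrix}$ in which the top $\ell'$ rows span the Hermitian hull; it then replaces the leading $I_{\ell'}$ by ${\rm diag}(1,\ldots,1,\alpha,\ldots,\alpha)$ with $\alpha^{q+1}\neq 1$, a coordinate scaling that visibly destroys the self-orthogonality of exactly the chosen hull rows while preserving all Hamming-weight data. You bypass the need for this hull-adapted systematic form by working directly with the Hermitian Gram matrix $M=G\overline{G}^{\,\top}$ and controlling its rank via a Schur-complement rank-one update: scaling a coordinate $j$ with $\g_j\notin{\rm col}(M)$ by any $\mu_j\in\F_q^*\setminus\{1\}$ forces ${\rm rank}(M)$ up by exactly one. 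Both approaches produce a monomially equivalent code with hull dimension reduced by the prescribed amount, after which the appeal to Lemma \ref{lem:quantum1} and monomial invariance of the weight data is identical. The paper's method is more explicit---once the systematic form is in hand, the coordinates to scale are determined---whereas yours is self-contained, needs no auxiliary lemma about systematic forms adapted to the hull, and makes both the existence of a suitable coordinate (via ${\rm col}(G)=\F_{q^2}^k$) and the role of the hypothesis $q>2$ completely transparent.
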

Lemma \ref{lem:qu-propagation} describes a trade-off between the dimension and the number of pre-shared entangled states for a pure EAQECC that has fixed length and minimum distance. Classical linear codes with large Hermitian hull dimensions give rise to more EAQECCs and, thus, such classical codes are of interest in constructing EAQECCS with \emph{both} small and large numbers of pre-shared entangled qudits.

To construct an EAQECC, we need a corresponding classical code $C$ with an exact Hermitian hull dimension. We can verify that, for $q>2$, an $\ell'$-dimensional Hermitian hull code $[n,k,d]_{q^2}$ gives rise to an $\ell$-dimensional Hermitian hull code $[n,k,d]_{q^2}$ for any $\ell\in \{0,\hdots,\ell'\}$. We explain the assertion here for completeness. We borrow some technique from \cite[Lemma 5 and the discussion in Section 5]{Sok1D1} and \cite[Theorem 6]{Luo2022}. Let the Hermitian hull dimension of $C$ is $\ell'= \ell+r'$ for some nonnegative integer $r'$. Let $G_2$ be a generator matrix of an $[n,k,d]_{q^2}$ code $C_2$ in systematic form. Then there exist matrices $A$ and $B$ such that
\[
G_2 :=
\begin{pmatrix}
I_{\ell'} & O & A\\
O & I_{k-\ell'} & B
\end{pmatrix},
\]
where $O$ and $I$ denote, respectively, the zero and identity matrices. The matrix $\begin{pmatrix} I_{\ell'} & O & A \end{pmatrix}$ generates a Hermitian self-orthogonal $[n,\ell']_{q^2}$ code. Let $\alpha\in \F_{q^2}^*$ be such that $\alpha^{q+1}\not=1$. We transform $G_2$ to
\[
G_2'=
\begin{pmatrix}
{\rm diag}(\underbrace{1,\hdots,1}\limits_{\ell}, \, \underbrace{\alpha,\hdots,\alpha}\limits_{r'})& O &A\\
O &I_{k-\ell'}&B\\
\end{pmatrix},
\]
which generates a code $C_2'$ whose parameters are the same as those of $C_2$. The Hermitian hull dimension of $C_2'$, however, is only $\ell$. Thus, $C_2'$ and $(C_2')^{\perp_{\rm H}}$ have the same hull dimension.

We proceed to construct EAQECCs based on the linear codes from Section \ref{sec:main}.
\begin{theorem}\label{thm:Q0}
Let $q$ be a prime power and let $n_0$ be an integer such that $1\le n_0\le q-1$. Let $k = k_0 \, q + q_0$, with $k_0$ and $q_0$ being integers satisfying $1 \le k_0 < \lfloor (n_0 \, q - q_0)/q \rfloor$ and $0\le q_0 \le q-1$. If $\ell$ is as given in \eqref{eq:ell-general} in Theorem \ref{thm:2new}, then there exist EAQECCs ${\cal Q}_1$ and ${\cal Q}_2$ with parameters 
\begin{align}
& [[n_0 \, q, k+1-\ell,  n_0 \, q-(k+1); n_0 \, q-(k+1)-\ell]]_{q} \label{eq:para-Q1} \mbox{ and}\\
& [[n_0 \, q, n_0 \, q-(k+1)-\ell, (k+1);(k+1)-\ell]]_{q}.
\label{eq:para-Q2}
\end{align}
\end{theorem}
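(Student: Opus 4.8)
The plan is to apply Corollary \ref{cor:1} to produce the classical ingredient and then feed it into Lemma \ref{lem:quantum1} twice, once for the code and once for its Hermitian dual, using the hull-dimension stability discussion that precedes this theorem. First I would invoke Corollary \ref{cor:1}: under the stated hypotheses $1\le n_0\le q-1$, $1\le k_0<\lfloor(n_0\,q-q_0)/q\rfloor$, and $0\le q_0\le q-1$, there is an $[n_0\,q,\,k+1,\,n_0\,q-k]_{q^2}$ MDS code, call it $C$, whose Hermitian hull has dimension at least $|L(q^2-1)|$. Since the hull dimension is \emph{at least} $\ell:=|L(q^2-1)|$, the stability argument just above (borrowed from \cite[Lemma 5]{Sok1D1} and \cite[Theorem 6]{Luo2022}, valid because $q>2$, which is forced since $n_0\,q\ge 2k_0\,q>q$ rules out $q=2$) lets me modify a systematic generator matrix by scaling $r'$ of the first coordinates by an $\alpha$ with $\alpha^{q+1}\neq 1$ to obtain an equivalent $[n_0\,q,\,k+1,\,n_0\,q-k]_{q^2}$ code whose Hermitian hull dimension is \emph{exactly} $\ell$.

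Next I would apply Lemma \ref{lem:quantum1} to this code, viewing it in the normalization of the discussion after Lemma \ref{lem:quantum1}: an $[n,n-k',k'+1]_{q^2}$ classical MDS code with $k'<\lfloor n/2\rfloor$ and Hermitian hull dimension $\ell$ yields an $[[n,\,n-k'-\ell,\,k'+1;\,k'-\ell]]_q$ MDS EAQECC. Here $n=n_0\,q$ and the code has dimension $k+1$, so setting $k'=n_0\,q-(k+1)$ and checking $k'<\lfloor n_0\,q/2\rfloor$ (equivalently $k+1>\lceil n_0\,q/2\rceil$, which follows from $k_0<\lfloor(n_0\,q-q_0)/q\rfloor$ keeping $k$ below roughly half of $n_0\,q$; I need to double-check the boundary here and possibly remark that the relevant inequality is the one that makes $k'$, not $k+1$, the smaller parameter) gives ${\cal Q}_1$ with parameters $[[n_0\,q,\,k+1-\ell,\,n_0\,q-(k+1);\,n_0\,q-(k+1)-\ell]]_q$, matching \eqref{eq:para-Q1}. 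For ${\cal Q}_2$ I would use the Euclidean/Hermitian dual code $C^{\perp_{\rm H}}$, which is $[n_0\,q,\,n_0\,q-(k+1),\,k+2]_{q^2}$ MDS and, by the last sentence of the stability discussion, has the \emph{same} Hermitian hull dimension $\ell$; applying Lemma \ref{lem:quantum1} to it (now with classical dimension $n_0\,q-(k+1)$, so $c=(k+1)-\ell$ and $\kappa=2(n_0\,q-(k+1))-n_0\,q+c=n_0\,q-(k+1)-\ell$) yields the parameters in \eqref{eq:para-Q2}.

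The main obstacle I anticipate is bookkeeping rather than conceptual: verifying the MDS property of the resulting quantum codes against the three-part Singleton-like bound \eqref{eq:QMDS_small_distance}--\eqref{eq:QMDS_large_distance}, and confirming the dimension inequality $k'<\lfloor n/2\rfloor$ actually holds under the hypotheses so that Lemma \ref{lem:quantum1}'s MDS conclusion applies — one of ${\cal Q}_1$, ${\cal Q}_2$ will be in the small-distance regime and the other possibly in the large-distance regime, so I expect to split into cases depending on how $k+1$ compares to $n_0\,q/2$. A secondary point to handle carefully is that Corollary \ref{cor:1} only guarantees $\ell\ge|L(q^2-1)|$, so the quantum parameters as written implicitly fix $\ell$ to be exactly that lower bound; the stability paragraph is precisely what legitimizes this, and I would state explicitly that we are constructing codes \emph{with} Hermitian hull dimension equal to $|L(q^2-1)|$ rather than claiming every code from Corollary \ref{cor:1} has that exact value.
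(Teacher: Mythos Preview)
Your approach is correct and matches the paper's: invoke Corollary~\ref{cor:1} for the classical MDS ingredient, use the hull-stability discussion to pin the Hermitian hull dimension at exactly $\ell$, then apply Lemma~\ref{lem:quantum1} to the code for $\mathcal{Q}_1$ and to its Hermitian dual for $\mathcal{Q}_2$. The Singleton-like checks and the inequality $k'<\lfloor n/2\rfloor$ you plan to verify are not needed here, since Theorem~\ref{thm:Q0} only asserts existence of EAQECCs with the stated parameters (Lemma~\ref{lem:quantum1} applies to any $[n,k,\delta]_{q^2}$ code, with no restriction on $k$), not that they are MDS; you can safely drop that part of the plan.
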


\begin{proof} 
Taking the code $C$ in Corollary \ref{cor:1} and applying Lemma \ref{lem:quantum1} yield the parameters in (\ref{eq:para-Q1}). We derive the parameters in (\ref{eq:para-Q2}) by applying Lemma \ref{lem:quantum1} on $C^{\perp_{\rm H}}$.
\end{proof}

Lemma \ref{lem:quantum1} tells us that the larger the Hermitian hull dimension $\ell$ is the smaller the number of pre-shared entangled states $c$ becomes. For the MDS case, a simple approach to check if we have new EAQECC parameters is to fix the $n$ and compare the minimum distance $\delta$ since there is no known propagation rule that can increase the minimum distance of an EAQECC. Known MDS EAQECCs tend to have low minimum distances when they are derived by the Hermitian construction with GRS codes as the classical ingredients. It was shown in \cite{FangFuLiZhu}, for example, that an $[n,k,n-k+1]_{q^2}$ GRS code leads to an MDS EAQECC of minimum distance $\delta=k+1\le \lfloor \frac{n+q-1}{q+1} \rfloor+1$. 

The method that we are proposing here yields classical codes with large dimensions $k$ and explicit Hermitian hull dimensions $\ell$, leading to EAQECCs with relatively large minimum distance $\delta$. The utility is apparent for constructing EAQECCs with small dimensions $\kappa$, large minimum distances $\delta$, and small number of pre-shared entangled states $c$. For a meaningful comparison based on Lemma \ref{lem:qu-propagation}, we fix $n$ and consider the values of $\delta$ obtained by different constructions to exhibit that our codes lead to new parameters not found in prior literature. Without applying any propagation rule, the best EAQECCs in \cite{PerPel} must have length $n=q^2$. Our Theorem \ref{thm:Q0} yields other lengths $n < q^2$. 

Table \ref{table1} lists the new parameters based on Theorem \ref{thm:Q0} for $q\in \{4,5,7\}$ to illustrate the efficacy of our approach. We compare our lower bound $\ell$ on the Hermitian hull dimension of the ingredient classical codes with the lower bound, denoted by $\ell_{\rm HC}$, that was recently obtained by H.~Chen in \cite[Main Result; see also Thm. 2.2]{Chen2024} when $t=0$ and $k \geq \frac{n}{2}$. Our bound is clearly sharper for the listed input parameters.

\begin{sidewaystable}
\caption{Parameters of qudit MDS (marked with $^*$) and almost MDS EAQECCs from Theorem \ref{thm:Q0} for $q\in \{4,5,7\}$. We denote by $\delta^o$ the minimum distance of the EAQECC upon applying the propagation rule in \cite[Corollary 11]{GHW} to the best-known QECC in \cite{Table}. The lower bound $\ell_{\rm HC}$ on the dimension of the Hermitian hulls of the ingredient classical codes are computed based on \cite[Main Result]{Chen2024}. The superior lower bound $\ell$ is ours.}
\label{table1}
\renewcommand{\arraystretch}{1.1}
\setlength{\tabcolsep}{5pt}
\centering
\begin{tabular}{ccc  lc lc |ccc  lc lc   }
\toprule
$(q,n_0,k_0,q_0)$ & $\ell_{\rm HC}$ & $\ell$ &  EAQECC ${\cal Q}_1$&$
\delta^o$ & EAQECC ${\cal Q}_2$&$
\delta^o$&
$(q,n_0,k_0,q_0)$& $\ell_{\rm HC}$ & $\ell$&  EAQECC ${\cal Q}_1$&$
\delta^o$ & EAQECC ${\cal Q}_2$&$
\delta^o$
\\
\midrule
 $( 4, 3, 1, 0)$ & - & $3$
    &$[[ 12, 2, {\bf 8}; 4 ]]_4$&$6$
    &$[[ 12, 4, {\bf 6}; 2 ]]_4^*$&$5$
&$(7, 5, 2, 0)$& - &$7$
    &$[[ 35, 8, {\bf 21}; 13 ]]_7$&$15$
    &$[[ 35, 13, {\bf 16}; 8 ]]_7^*$&$11$
\\

$( 4, 3, 1, 1)$& - & $4$
    &$[[ 12, 2, {\bf 7}; 2 ]]_4^*$&$6$
    &$[[ 12, 2, {\bf 7}; 2 ]]_4^*$&$6$
&$ (7, 5, 3, 0)$& $5$ &$7$
    &$[[ 35, 15, {\bf 14}; 6 ]]_7^*$&$10$
    &$[[ 35, 6, {\bf 23}; 15 ]]_7$&$16$
\\

$( 4, 3, 1, 3)$ & $1$ & $2$
    &$[[ 12, 6, {\bf 5}; 2 ]]_4^*$&$4$
    &$[[ 12, 2, {\bf 9}; 6 ]]_4$&$7$
&$(7, 5, 1, 1)$ & - &$6$
    &$[[ 35, 3, {\bf 27}; 20 ]]_7$&$19$
    &$[[ 35, 20, {\bf 10}; 3 ]]_7^*$&$7$
\\
$( 5, 3, 1, 0)$ & -
    & $3$
    &$[[ 15, 3, {\bf 10}; 6 ]]_5$&$8$
    &$[[ 15, 6, {\bf 7}; 3 ]]_5^*$&$6$
&$(7, 5, 2, 1)$ & -
    & $8$
    &$[[ 35, 8, {\bf 20}; 11 ]]_7$&$14$
    &$[[ 35, 11, {\bf 17}; 8 ]]_7^*$&$12$
\\
$( 5, 3, 1, 1)$ & - & $4$
    &$[[ 15, 3, {\bf 9}; 4 ]]_5$&$7$
    &$[[ 15, 4, {\bf 8}; 3 ]]_5^*$&$6$
&$( 7, 5, 3, 1)$& $5$ & $8$
    &$[[ 35, 15, {\bf 13}; 4 ]]_7^*$&$9$
    &$[[ 35, 4, {\bf 24}; 15 ]]_7$&$17$
\\

$( 5, 3, 1, 4)$ & $1$ & $2$
    &$[[ 15, 8, {\bf 6}; 3 ]]_5^*$&$5$
    &$[[ 15, 3, {\bf 11}; 8 ]]_5$&$8$
&$( 7, 5, 1, 2)$ & - & $7$
    &$[[ 35, 3, {\bf 26}; 18 ]]_7$&$18$
    &$[[ 35, 18, {\bf 11}; 3 ]]_7^*$&$8$
\\

$( 5, 4, 1, 0)$ & - & $4$
    &$[[ 20, 2, {\bf 15}; 10 ]]_5$&$11$
    &$[[ 20, 10, {\bf 7}; 2 ]]_5^*$&$6$
&$(7, 5, 2, 2)$ & - & $9$
    &$[[ 35, 8, {\bf 19}; 9 ]]_7$&$13$
    &$[[ 35, 9, {\bf 18}; 8 ]]_7^*$&$13$
\\

$( 5, 4, 2, 0)$& $3$ & $5$
    &$[[ 20, 6, {\bf 10}; 4 ]]_5^*$&$8$
    &$[[ 20, 4, {\bf 12}; 6 ]]_5$&$10$
&$( 7, 5, 3, 2)$ & $4$ & $8$
    &$[[ 35, 16, {\bf 12}; 3 ]]_7^*$&$9$
    &$[[ 35, 3, {\bf 25}; 16 ]]_7$&$18$
\\

$ (5, 4, 1, 1)$& - & $5$
    &$[[ 20, 2, {\bf 14}; 8 ]]_5$&$11$
    &$[[ 20, 8, {\bf 8}; 2 ]]_5^*$&$6$
&$(7, 5, 1, 6)$& - & $6$
    &$[[ 35, 8, {\bf 22}; 15 ]]_7$&$15$
    &$[[ 35, 15, {\bf 15}; 8 ]]_7^*$&$10$
\\
$(5, 4, 2, 1)$& $3$ & $6$
    &$[[ 20, 6, {\bf 9}; 2 ]]_5^*$&$7$
    &$[[ 20, 2, {\bf 13}; 6 ]]_5$&$10$
&$(7, 5, 3, 6)$& $3$ & $4$
    &$[[ 35, 24, {\bf 8}; 3 ]]_7^*$&$7$
    &$[[ 35, 3, {\bf 29}; 24 ]]_7$&$20$
\\

$( 5, 4, 1, 4)$& - & $4$
    &$[[ 20, 6, {\bf 11}; 6 ]]_5^*$&$9$
    &$[[ 20, 6, {\bf 11}; 6 ]]_5^*$&$9$
&$ (7, 6, 1, 0)$& - & $6$
    &$[[ 42, 2, {\bf 35}; 28 ]]_7$&$25$
    &$[[ 42, 28, {\bf 9}; 2 ]]_7^*$&$7$
\\

$(5, 4, 2, 4)$ & $2$ & $3$
    &$[[ 20, 12, {\bf 6}; 2 ]]_5^*$&$5$
    &$[[ 20, 2, {\bf 16}; 12 ]]_5$&$12$
&$(7, 6, 2, 0)$ & - & $9$
    &$[[ 42, 6, {\bf 28}; 18 ]]_7$&$20$
    &$[[ 42, 18, {\bf 16}; 6 ]]_7^*$&$11$
\\

$( 7, 3, 1, 0)$ & - & $3$
    &$[[ 21, 5, {\bf 14}; 10 ]]_7$&$11$
    &$[[ 21, 10, {\bf 9}; 5 ]]_7^*$&$7$
&$( 7, 6, 3, 0)$ & $8$ & $10$
    &$[[ 42, 12, {\bf 21}; 10 ]]_7^*$&$14$
    &$[[ 42, 10, {\bf 23}; 12 ]]_7$&$16$
\\

$(7, 3, 1, 1)$ & - & $4$
    &$[[ 21, 5, {\bf 13}; 8 ]]_7$&$10$
    &$[[ 21, 8, {\bf 10}; 5 ]]_7^*$&$8$
&$(7, 6, 4, 0)$ & $7$ & $9$
    &$[[ 42, 20, {\bf 14}; 4 ]]_7^*$&$10$
    &$[[ 42, 4, {\bf 30}; 20 ]]_7$&$21$
\\

$( 7, 3, 1, 2)$ & - & {$4$}
    &$[[ 21, 6, {\bf 12}; 7 ]]_7$&$9$
    &$[[ 21, 7, {\bf 11}; 6 ]]_7^*$&$8$

&$( 7, 6, 1, 1)$& - & $7$
    &$[[ 42, 2, {\bf 34}; 26 ]]_7$&$24$
    &$[[ 42, 26, {\bf 10}; 2 ]]_7^*$&$7$
\\

$(7, 3, 1, 6)$ & $1$ & $2$
    &$[[ 21, 12, {\bf 8}; 5 ]]_7^*$&$7$
    &$[[ 21, 5, {\bf 15}; 12 ]]_7$&$11$
&$( 7, 6, 2, 1)$ & - & $10$
    &$[[ 42, 6, {\bf 27}; 16 ]]_7$&$19$
    &$[[ 42, 16, {\bf 17}; 6 ]]_7^*$&$12$
\\

$( 7, 4, 1, 0)$ & - & $4$
    &$[[ 28, 4, {\bf 21}; 16 ]]_7$&$15$
    &$[[ 28, 16, {\bf 9}; 4 ]]_7^*$&$7$
&$( 7, 6, 3, 1)$ & $8$ & $11$
    &$[[ 42, 12, {\bf 20}; 8 ]]_7^*$&$14$
    &$[[ 42, 8, {\bf 24}; 12 ]]_7$&$17$
\\

$ (7, 4, 2, 0)$ & $3$ & $5$
    &$[[ 28, 10, {\bf 14}; 8 ]]_7^*$&$10$
    &$[[ 28, 8, {\bf 16}; 10 ]]_7$&$12$
&$( 7, 6, 4, 1)$ & $6$ & $10$
    &$[[ 42, 20, {\bf 13}; 2 ]]_7^*$&$10$
    &$[[ 42, 2, {\bf 31}; 20 ]]_7$&$22$
\\
$(7, 4, 1, 1)$ & - & $5$
    &$[[ 28, 4, {\bf 20}; 14 ]]_7$&$14$
    &$[[ 28, 14, {\bf 10}; 4 ]]_7^*$&$8$
&$(7, 6, 1, 2)$ & - & $8$
    &$[[ 42, 2, {\bf 33}; 24 ]]_7$&$23$
    &$[[ 42, 24, {\bf 11}; 2 ]]_7^*$&$8$
\\
$( 7, 4, 2, 1)$ & - & $6$
    &$[[ 28, 10, {\bf 13}; 6 ]]_7^*$&$10$
    &$[[ 28, 6, {\bf 17}; 10 ]]_7$&$13$
& $(7, 6, 2, 2)$ & - & $11$
    &$[[ 42, 6, {\bf 26}; 14 ]]_7$&$18$
    &$[[ 42, 14, {\bf 18}; 6 ]]_7^*$&$12$
\\

$(7, 4, 1, 2)$ & - & $6$
    &$[[ 28, 4, {\bf 19}; 12 ]]_7$&$14$
    &$[[ 28, 12, {\bf 11}; 4 ]]_7^*$&$8$
&$(7, 6, 3, 2)$ & $7$ & $12$
    &$[[ 42, 12, {\bf 19}; 6 ]]_7^*$&$13$
    &$[[ 42, 6, {\bf 25}; 12 ]]_7$&$17$
\\

$(7, 4, 2, 2)$ & $3$ & $6$
    &$[[ 28, 11, {\bf 12}; 5 ]]_7^*$&$9$
    &$[[ 28, 5, {\bf 18}; 11 ]]_7$&$13$
&$(7, 6, 4, 2)$ & $6$ & $9$
    &$[[ 42, 22, {\bf 12}; 2 ]]_7^*$&$9$
    &$[[ 42, 2, {\bf 32}; 22 ]]_7$&$23$
\\

$( 7, 4, 1, 6)$& - & $4$
    &$[[ 28, 10, {\bf 15}; 10 ]]_7^*$&$11$
    &$[[ 28, 10, {\bf 15}; 10 ]]_7^*$&$11$
&$(7, 6, 1, 6)$& - & $8$
    &$[[ 42, 6, {\bf 29}; 20 ]]_7$&$20$
    &$[[ 42, 20, {\bf 15}; 6 ]]_7^*$&$10$
\\

$( 7, 4, 2, 6)$ & $2$ & $3$
    &$[[ 28, 18, {\bf 8}; 4 ]]_7^*$&$7$
    &$[[ 28, 4, {\bf 22}; 18 ]]_7$&$16$
&$( 7, 6, 2, 6)$ & - & $9$
    &$[[ 42, 12, {\bf 22}; 12 ]]_7^*$&$15$
    &$[[ 42, 12, {\bf 22}; 12 ]]_7^*$&$15$
\\

$(7, 5, 1, 0)$ & - & $5$
    &$[[ 35, 3, {\bf 28}; 22 ]]_7$&$20$
    &$[[ 35, 22, {\bf 9}; 3 ]]_7^*$&$7$
&$(7, 6, 4, 6)$ & $4$ & $5$
    &$[[ 42, 30, {\bf 8}; 2 ]]_7^*$&$7$
    &$[[ 42, 2, {\bf 36}; 30 ]]_7$&$25$\\
\bottomrule
\end{tabular}
\end{sidewaystable}

\subsection*{Comparison 1}
\begin{itemize}
\item 
Our EAQECCs have minimum distances that are strictly larger than those in \cite{FangFuLiZhu}. When $q=4$ and $n=12$, for example, the largest possible $\delta$ of any MDS EAQECC in~\cite{FangFuLiZhu} is at most $\lfloor \frac{n+q-1}{q+1} \rfloor+1=4$, if it exists. We see in Table \ref{table1} the new parameters $[[12,6,5;2]]_4$, $[[12,4,6;2]]_4$, and $[[12,2,7;2]]_4$, all with $\delta > 4$. Table \ref{table1} presents our new parameters for $q \in \{5,7\}$ for different lengths. 
\item Assisted by $2$ entangled $4$-dits, our $[[12,2,7;2]]_4$ code can correct $1$ more error than the best-known $[[12,2,5]]_4$ QECC in \cite{Table}, which is not entanglement-assisted. 

In general, the existence of an $[[n,n-2(\delta-1),\delta]]_q$ QECC implies the existence of an $[[n-c,n-2(\delta-1),\delta;c]]_q$ EAQECC by a propagation rule in \cite[Corollary 11]{GHW}. The $[[14,2,6]]_4$ QECC in \cite{Table}, for instance, yields a $[[12,2,6;2]]_4$ EAQECC. The minimum distance of our new $[[12,2,7;2]]_4$ code is strictly larger. Applying the propagation rule to known QECCs in \cite{Table}, Table \ref{table1} makes a meaningful comparison on the minimum distance for fixed $(n,\kappa,c)$. 

Columns with header $\delta^o$ in Table \ref{table1} provide the respective best minimum distances $\delta^o$ of the 
$[[n-c,n-2(\delta^o-1),\delta^o;c]]_q$ codes $\mathcal{Q}_0$ that the propagation rule produces based on the parameters of the corresponding $[[n,n-2(\delta^o-1),\delta^o]]_q$ best-known QECCs in \cite{Table}. Our new $\mathcal{Q}_1$ and $\mathcal{Q}_2$ have the same length and dimension as $\mathcal{Q}_0$ but strictly better minimum distance $\delta > \delta^0$.

To highlight the fact that our codes have comparatively better minimum distances, the values are presented in bold. For example, given $n=12$, $\kappa=2$, and $c=4$, the code derived by the propagation rule from the best known $[[16,2,6]]_4$ QECC has parameters $[[12,2,\delta^0=6;4]]_4$. Ours is the $[[12,2,{\bf 8};4]]_4$ code in the third column. Other entries can be similarly interpreted.
\item For $q=7$, the codes we obtain by Theorem \ref{thm:Q0} have lengths that are not covered by those in \cite{ChenHuangFengChen,FanChenXu,LSEL,WangZhuSun,Sari}. To the best of our knowledge, there are no known propagation rules that can derive the parameters of our EAQECCs from previously known ones. 
\end{itemize}

Putting the parameters and setups in perspective, our approach results in classical codes with arbitrary Hermitian hull dimensions, leading to explicit determination of the resulting parameters of the corresponding EAQECCs. We then have the flexibility to design EAQECCs with large minimum distances while keeping the number of required pairs of entangled states small. Looking at the resulting parameters, one can carefully weigh the trade-offs between using the best-known QECCs in \cite{Table} and utilizing those in Table \ref{table1} if $c$ is small and the gain in $\delta$ is significant.

Applying Lemma \ref{lem:quantum1} to the classical codes of Corollary \ref{cor:2} gives us the next result.

\begin{theorem}\label{thm:Q1} 
Let $q$ be a prime power and let integers $t$, $s$, and $r$ be such that 
\[
s \mbox{ divides } (q^{2}-1), \quad 
r=\frac{s}{\gcd (s,q+1)} \mbox{, and }~ 
 1 \leq t \leq \frac{q-1}{r}-1.
\]
We write $n=(t+1) \, s+1$ as $n = n_0 \, q+ q_1$ for some $1 \leq n_0\le q-1$, $0\le q_1 \leq q-1$. We express $k= k_0 \, q + q_0$, with 
\[
1 \leq k_0 < \lfloor (q_1 + n_0 \, q - q_0) /q \rfloor, \quad 
0 \leq q_0 \leq q-1 \mbox{, and }~ q_1-q_0 \leq 1.
\]
If $\ell$ is as in \eqref{eq:ell-general}, then there exist ${\cal Q}_1$ and ${\cal Q}_2$ with parameters
\begin{align}
& [[n, k+1-\ell, n-(k+1);n-(k+1)-\ell]]_{q} \mbox{ and }\\
& [[n, n-(k+1)-\ell, (k+1);(k+1)-\ell]]_{q}.
\end{align}
\end{theorem}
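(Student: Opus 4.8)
The plan is to mirror the proof of Theorem~\ref{thm:Q0}, feeding in the classical code supplied by Corollary~\ref{cor:2} in place of the one from Corollary~\ref{cor:1}. First I would note that the hypotheses on $q$, $t$, $s$, $r$, the length $n=(t+1)\,s+1=n_0\,q+q_1$, and the parameter $k=k_0\,q+q_0$ in Theorem~\ref{thm:Q1} are exactly those needed to invoke Corollary~\ref{cor:2}: the conditions $s\mid(q^2-1)$, $r=s/\gcd(s,q+1)$, $1\le t\le(q-1)/r-1$, $1\le n_0,q_1\le q-1$, $1\le k_0<\lfloor(q_1+n_0\,q-q_0)/q\rfloor$, $0\le q_0\le q-1$, and $q_1-q_0\le1$ are carried over verbatim. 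Corollary~\ref{cor:2} then yields an $[n,k+1,n-k]_{q^2}$ MDS code $C$ whose Hermitian hull has dimension $\ell'\ge|L(q^2-1)|=\ell$, with $\ell$ the quantity defined case-by-case in \eqref{eq:ell-general}.

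The second step is to realize the \emph{exact} hull dimension $\ell$. Since $1\le t\le(q-1)/r-1$ forces $(q-1)/r\ge2$ and hence $q\ge3$, the rescaling construction described just before Theorem~\ref{thm:Q0} applies: starting from a systematic generator matrix of $C$ and scaling $\ell'-\ell$ of the $\ell'$ rows generating its Hermitian hull by a fixed $\alpha\in\F_{q^2}^*$ with $\alpha^{q+1}\ne1$, we obtain a code $C'$ with the same parameters $[n,k+1,n-k]_{q^2}$ but with $\dim{\rm Hull}_{\rm H}(C')=\ell$. Since ${\rm Hull}_{\rm H}(E)={\rm Hull}_{\rm H}(E^{\perp_{\rm H}})$ for every code $E$, the code $(C')^{\perp_{\rm H}}$, which is an $[n,n-(k+1),k+2]_{q^2}$ MDS code, also has Hermitian hull of dimension $\ell$. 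I would also record that $0\le\ell\le\min\{k+1,\,n-(k+1)\}$, because ${\rm Hull}_{\rm H}(C')$ is a subspace of both $C'$ and $(C')^{\perp_{\rm H}}$; this keeps all the entanglement and dimension parameters below nonnegative.

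The last step is to apply Lemma~\ref{lem:quantum1} to $C'$ and to $(C')^{\perp_{\rm H}}$. For $C'$ it gives $c=(n-(k+1))-\ell$ and $\kappa=2(k+1)-n+c=(k+1)-\ell$, producing the EAQECC ${\cal Q}_1$ with parameters $[[n, k+1-\ell, n-(k+1); n-(k+1)-\ell]]_q$. For $(C')^{\perp_{\rm H}}$ it gives $c=(k+1)-\ell$ and $\kappa=2\bigl(n-(k+1)\bigr)-n+c=n-(k+1)-\ell$, producing ${\cal Q}_2$ with parameters $[[n, n-(k+1)-\ell, (k+1); (k+1)-\ell]]_q$.

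I do not anticipate a genuine obstacle: the statement is a repackaging of Corollary~\ref{cor:2} through Lemma~\ref{lem:quantum1}, entirely parallel to Theorem~\ref{thm:Q0}. The one point that requires attention is that Corollary~\ref{cor:2} only supplies the lower bound $\ell'\ge\ell$, so the rescaling step (together with the harmless standing hypothesis $q>2$) is what actually makes the exact value $\ell$ from \eqref{eq:ell-general} available in the final parameters; the remaining work is the arithmetic bookkeeping already carried out in Theorem~\ref{thm:2new} for the four cases defining $\ell$.
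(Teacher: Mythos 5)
Your proposal is correct and takes essentially the paper's own route: the paper obtains Theorem~\ref{thm:Q1} exactly as Theorem~\ref{thm:Q0}, by applying Lemma~\ref{lem:quantum1} to the code of Corollary~\ref{cor:2} and to its Hermitian dual, with the exact hull dimension $\ell$ realized through the hull-reduction construction discussed just before Theorem~\ref{thm:Q0} (valid since $q>2$). Your only slip is cosmetic: that construction rescales the diagonal entries of the identity block of the systematic generator matrix (equivalently, certain coordinates), not whole rows---scaling entire generator rows would leave the code unchanged---but this does not affect the argument.
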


Applying the propagation rule of Lemma \ref{lem:qu-propagation} does not lead to overlapping parameters. Table \ref{table2} lists the parameters of previously known and new $7$-ary MDS EAQECCs. The new ones are computed based on Theorem \ref{thm:Q1} and Corollary \ref{cor:3}.

\begin{table}
\caption{Previously Known and New MDS EAQECCs with Parameters $[[n,\kappa,\delta;c]]_7$.}
\label{table2}
\renewcommand{\arraystretch}{1.2}
\setlength{\tabcolsep}{5pt}
\centering
\begin{tabular}{lc | lc | lc  lc }
\toprule
$[[n,\kappa,\delta;c]]_7$ & Ref. & 
$[[n,\kappa,\delta;c]]_7$ & Ref. &
$[[n,\kappa,\delta;c]]_7$ & Ref.  
\\
\midrule

$[[24,4,13;4]]_7$ &\cite{ChenHuangFengChen} &
$[[25,15,11;10]]_7$ &\cite{FanChenXu} &
$[[41,19,15;6]]_7$ & \cite{LSEL}  
\\

$[[24,6,12;4]]_7$ &\cite{ChenHuangFengChen} &
$[[25,16,10;9]]_7$ &\cite{FanChenXu} &
$[[41, 20, 14; 5 ]]_7$& Thm. \ref{thm:Q1}
\\

$[[24,8,10;2]]_7$ &\cite{FanChenXu} &
$[[25,17,9;8]]_7$ &\cite{FanChenXu} &
$[[41,23,11;2]]_7$ & \cite{LSEL}  
\\

$[[24,10,9;2]]_7$ &\cite{FanChenXu} &
$[[25,18,8;7]]_7$ &\cite{FanChenXu} &
$[[41,25,10;2]]_7$ & \cite{LSEL}  
\\

$[[24,12,8;2]]_7$ &\cite{FanChenXu} &
$[[33,10,16;8]]_7$ & \cite{LSEL}  &
$[[41,27,9;2]]_7$ & \cite{LSEL}  
\\

$[[25,5,13;4]]_7$ &\cite{WangZhuSun} &
$[[33, 13, 15; 8 ]]_7$& Thm. \ref{thm:Q1}&
$[[41,29,8;2]]_7$ & \cite{LSEL}  
\\

$[[ 25, 6, 13; 5 ]]_7$& Cor. \ref{cor:3} and Table 1 &
$[[ 33, 14, 14; 7 ]]_7$& Thm. \ref{thm:Q1}&
$[[49,12,24;9]]_7$ & \cite{LSEL}  
\\

$[[ 25, 8, 12; 5 ]]_7$& Cor. \ref{cor:3} and Table 1&
$[[ 33, 15, 13; 6 ]]_7$& Thm. \ref{thm:Q1}&
$[[49,14,23;9]]_7$ & \cite{LSEL}  
\\

$[[25,9,11;4]]_7$ &\cite{WangZhuSun} &
$[[33,17,10;2]]_7$ & \cite{LSEL}  &
$[[49,16,22;9]]_7$ & \cite{LSEL}  
\\

$[[25,11,9;2]]_7$ & \cite{LSEL}  &
$[[33,19,9;2]]_7$ & \cite{LSEL}  &
$[[49,19,18;4]]_7$ & \cite{LSEL}  
\\

$[[25,13,8;2]]_7$ & \cite{LSEL} &
$[[33,21,8;2]]_7$ & \cite{LSEL}  &
$[[49,21,17;4]]_7$ & \cite{LSEL} 
\\

$[[25,13,9;4]]_7$ &\cite{WangZhuSun} &
$[[ 41, 12, 21; 11 ]]_7$& Thm. \ref{thm:Q1}&
$[[49,23,16;4]]_7$ & \cite{LSEL} 
\\

$[[25,13,13;12]]_7$ &\cite{FanChenXu} &
$[[ 41,15,17;6]]_7$ & \cite{LSEL}  &
$[[49,25,15;4]]_7$ &\cite{LSEL} 
\\

$[[25,14,12;11]]_7$ &\cite{FanChenXu} &
$[[41,17,16;6]]_7$ & \cite{LSEL}  &
$[[49,26,13;1]]_7$ &\cite{FanChenXu} 
\\
\bottomrule
\end{tabular}
\end{table}

\subsection*{Comparison 2} 
Theorem \ref{thm:Q1} gives us a $[[33,13,15;8]]_7$ MDS EAQECC. There is a known $[[33,10,16;8]]_7$ MDS EAQECC from \cite{LSEL} (see Table \ref{table2}). Both codes share the same $n$ and $c$. Our code has more codewords, the same error-correction capability on $7$ arbitrary positions in the quantum ensemble, but with $1$ less error-detection power.

\section{Concluding Remarks}
We have presented our studies on the Hermitian hulls of one-point generalized rational AG codes ${\bf v} \, C_{{\cal L}}(D, G)$ over $\F_{q^2}$, where ${\bf v}=(v_1,\hdots,v_n)\in \left(\F_{q^2}^*\right)^n$, $D=P_1+\cdots+P_n$, $G=k \, O$, $n=n_0 \, q+ q_1$, and $k=k_0 \, q+ q_0$, with specific constraints on $n_0$, $k_0$, $q_0$, and $q_1$. An excellent lower bound on the hull dimensions can be explicitly computed upon careful selection of the corresponding sets of evaluation points. 

Our approach leads to MDS linear codes with designed hull dimensions, resulting in two new families of EAQECCs with excellent parameters. In terms of the classical codes that we use as ingredients to derive the parameters of the corresponding EAQECCs, we have established an excellent lower bound $\ell$ on the dimensions of their respective Hermitian hulls. This bound, illustrated in Table \ref{table1}, is sharper than the recently published bound of Chen from \cite{Chen2024}.

Two open directions emerge from our studies. One can explore if new families of good EAQECCs can be built from other sets of evaluation points on rational curves and with different constraints from those in Theorem \ref{thm:2new}. Another option is to utilize more general algebraic curves.


\end{document}